\documentclass[a4paper,11pt]{article}

\usepackage[margin=1in]{geometry}  
\usepackage{amsmath, amssymb}      
\usepackage{graphicx}              
\usepackage{booktabs}              
\usepackage{hyperref}              
\usepackage{bm}
\usepackage{bbm}
\usepackage{algorithm}
\usepackage[noend]{algpseudocode}
\usepackage{lipsum}                
\usepackage{multirow}
\usepackage{xcolor}
\usepackage{verbatim}
\usepackage[color=lime]{todonotes}
\usepackage{booktabs,siunitx}
\usepackage{amsmath, amssymb, amsthm, bm}

\theoremstyle{definition}    
\newtheorem{example}{Example}[section]

\newtheorem{theorem}{Theorem}[section]

\usepackage[numbers]{natbib}
\newcommand{\mv}[1]{\boldsymbol{#1}}
\newcommand{\E}{\mathbb{E}}                   
\DeclareMathOperator{\tr}{tr}                 

\newcommand{\uarg}{\;\cdot\;}                 
\title{Adaptive Riemannian Manifold Hamiltonian Monte Carlo with Hierarchical Metric}
\author{Miika Kailas$^1$ \and Matti Vihola$^1$ \and Jonas Wallin$^2$ \\[1ex]
$^1$ Department of Mathematics and Statistics, University of Jyväskylä \\
$^2$ Department of Statistics, Lund University
}
\date{}

\begin{document}

\maketitle

\begin{abstract}
Hamiltonian Monte Carlo (HMC) and its dynamic extensions, such as the No-U-Turn Sampler (NUTS),
 are powerful Markov chain Monte Carlo methods for sampling from complex, high-dimensional probability distributions.
  Riemannian manifold Hamiltonian Monte Carlo (RMHMC) extends HMC by allowing the mass matrix to depend on position, 
  which can substantially improve mixing but also makes implementation considerably more challenging.
   In this paper, we study an adaptive hierarchical version of RMHMC that is well suited to many hierarchical sampling problems.
    A key feature of hierarchical RMHMC is that, unlike general RMHMC, it admits a closed-form explicit leapfrog integrator, enabling efficient implementation
     and direct use within dynamic HMC methods such as NUTS. 
     We introduce an adaptive scheme that automatically tunes the parameters of the hierarchical mass matrix during simulation. 
     Importantly, the target density need not exhibit any hierarchical or block structure;
      the hierarchy is instead imposed on the mass matrix as a modeling device to capture the local geometry of the target distribution. 
      Numerical experiments demonstrate appealing empirical performance in high-dimensional Bayesian inference problems.
  \vspace{1em}

\noindent\textbf{Keywords:} Adaptive Markov chain Monte Carlo, hierarchical model, information matrix, Riemannian manifold Hamiltonian Monte Carlo, splitting integrator
\end{abstract}

\section{Introduction}
\label{sec:intro}

Hamiltonian (or hybrid) Monte Carlo (HMC) \citep{duane1987hybrid,neal2011,Betancourt2017} are  Markov chain Monte Carlo (MCMC) methods designed to sample from a generic target probability density in $\mathbb{R}^d$ having the following form:
\begin{align}
  \label{eq:target_distribution}
  \pi(\mv{\theta}) = \frac{\pi_u(\mv{\theta})}{\int \pi_u(\mv{\theta}') \, d\mv{\theta}'},\qquad
  \text{where}\qquad
  \pi_u(\mv{\theta}) = e^{-U(\mv{\theta})},
\end{align} 
with (potential energy) function $U(\mv{\theta}) = - \log \pi_u(\mv{\theta})$ that is finite and differentiable. To sample with HMC, one only needs access to $U(\mv{\theta})$ and its gradient $\nabla U(\mv{\theta})$. The rationale of HMC is to mimic simulation of (fictitious) Hamiltonian dynamics in a phase space (i.e.~with \emph{positions} and \emph{momenta}), and therefore be able to make large, coherent jumps across the target distribution, avoiding diffusive behavior that many other MCMC algorithms based on local moves can have, particularly in higher dimensions~\cite[cf.][]{roberts-rosenthal-optimal-scaling}.

While HMC methods work efficiently in many scenarios, it is well-known that they can be sensitive to the geometry of the target distribution. The Neal's funnel distribution \citep{neal2003slice} is a canonical example with a bad geometry for HMC methods. It is defined as the following two-level hierarchical model with normal prior and conditionals:
\begin{equation}
\label{eq:funnel_distribution}
  v 
  \sim \mathcal{N}\left(0,3^2\right), 
  \qquad\text{and}\qquad
  \mv{x} 
  \sim \mathcal{N}\left(0, \exp(v)\bf{I}\right).
\end{equation}
The changing scale of the conditional distribution of $\mv{x}$ in terms of $v$ creates the `funnel' like geometry, which causes slow mixing for many widely used HMC algorithms.


There are two main approaches for mitigating the mixing issues and thereby improving the efficiency of HMC: (1) improving the integrator of the HMC, or (2) 
changing the geometry of the target distribution or the Hamiltonian. For instance, \citep{bourabee-carpenter-kleppe-marsden,bou2025within,modi2024delayed} (see also \citep{biron2024automala} regarding a Metropolis-adjusted Langevin algorithm) present integrators with adaptive step sizes which can mitigate the issues of dynamic
HMC with targets having varying scales, such as the funnel distribution. However, even if the integrator would be \emph{ideal}, 
such a target can be difficult to explore for a HMC sampler; we discuss this aspect further in Section \ref{sec:geometry_hierarchical_models}.

Our focus will be on the latter strategy, where the most common method is to reparameterize the target distribution so that it is more amenable to HMC. 
This can be done for instance manually \citep{papaspiliopoulos2007general} or by learning a transformation \citep[e.g.][]{hoffman2019neutra}. 
Non-Gaussian momentum distributions have also been suggested to improve mixing \citep{livingstone2019kinetic}. 
A Gaussian kinetic energy distribution can also be combined with a position-dependent mass matrix; this is an incomplete reparameterization in the terminology of \citet{betancourt2019incomplete}. 
The position-dependent mass matrix leads to the Riemannian manifold HMC (RMHMC) \citep{GirolamiCalderhead2011}, which we discuss further in Section \ref{sec:rmhmc}. 
RMHMC is often limited in practice by its computational cost.

In particular, the generalized leapfrog integrator entails implicit updates, and each step requires solving linear systems involving the mass matrix. 
One way to alleviate this difficulty is to reformulate the dynamics in terms of velocity rather than momentum, which yields a semi-explicit integrator in which the position update is explicit while the velocity update remains implicit; see \citep{lan2015markov} for details of a method admitting a fully explicit integrator.
Another possibility is to impose additional structure on the position-dependent mass matrix. In particular, one can specify a position-dependent diagonal hierarchical mass-matrix structure so that the integrator admits an explicit form \citep{zhang2014semi,kleppe2019dynamically}. 
This hierarchical RMHMC method is presented in Section \ref{sec:hierarchical_rhmhc}. 
We detail how the hierarchical RMHMC can be used within a dynamic HMC such as the No-U-Turn Sampler (NUTS) in Appendix \ref{sec:nuts-trajectory}.
dynamic HMC such as the No-U-Turn Sampler (NUTS) in Appendix \ref{sec:nuts-trajectory}.

Choosing the (position-dependent) mass matrix in the HMC is linked with a broader question of selecting the geometry or `information' matrices of the proposal for MCMC algorithms, or preconditioning, that is, applying a transformation on the target distribution before sampling \citep{hird-livingstone}. 
In the random-walk Metropolis, the Metropolis adjusted Langevin algorithm (MALA), and the HMC, a common strategy is to use the inverse empirical covariance of pilot draws as the preconditioner.
Beyond this, several other information-based metrics have been proposed, some of which we have gathered to Table \ref{tab:info}.
Note that $\mathcal{I}_1$ and $\mathcal{I}_3$ are global, in contrast with $\mathcal{I}_2,\mathcal{I}_4$ and $\mathcal{I}_5$
 which are local, that is, depend on the position $\mv{\theta}$.

\begin{table}[H]
  \caption{Information matrices in the literature. Expectations are taken with respect to (a conditional of) $\pi$.}
  \label{tab:info}
  \medskip
  \begin{center}
\begin{tabular}{lll}
  \toprule
  Name & Refs & Definition \\
  \midrule
Inverse posterior covariance 
& \citep{haario2001} 
& $\mathcal{I}_1 \;=\; \mathrm{Cov}(\pi)^{-1}$ \\
Fisher information metric 
&\citep{GirolamiCalderhead2011}
& $\mathcal{I}_2(\mv{\theta})\;=\; \mathbb{E}_{Y\mid \mv{\theta}}\!\left[-\nabla_{\mv{\theta\theta}}^2 \log \pi(\mv{\theta}\mid Y)\right]
$ \\
Average observed information 
&\citep{titsias2023optimal}
& $
\mathcal{I}_3 \;=\; \mathbb{E}_{\mv{\theta}}\!\left[\nabla_{\mv{\theta\theta}}^2 U (\mv{\theta})\right]
$\\
Local negative Hessian 
& \citep{petra2014computational}
& $
\mathcal{I}_4(\mv{\theta})\;=\;\nabla_{\mv{\theta\theta}}^2U(\mv{\theta})$\\
 Monge metric
& \citep{hartmann2022lagrangian} &
$ \mathcal{I}_{5}(\mv{\theta})
\;=\;
{\bf I} + \alpha^2 \nabla \pi(\mv{\theta})\nabla \pi(\mv{\theta})^\mathsf{T}
$ \\
Average conditional observed information & \citep{kleppe2019dynamically} &
$ \mathcal{I}_{B\mid A}(\mv{\theta}_A)
\;=\;
\mathbb{E}_{\mv{\theta}_B}\!\left[-\nabla_{\mv{\theta}_B\mv{\theta}_B}^2\log \pi(\mv{\theta})\,\middle|\,\mv{\theta}_A\right].
$ \\
\bottomrule \\
\end{tabular}
\end{center}
\end{table}

In \citet{hartmann2022lagrangian}, the authors employ the solver of \citet{lan2015markov}. 
In that setting, efficient integration is possible because the Monge metric depends on $\mv{\theta}$ only through a rank-one modification of the identity. 
In the present work, by contrast, we focus on metrics with a more general high-dimensional dependence.
In many hierarchical models, there is a natural block structure, that is, the posterior variables, or the coordinates of the parameter vector, are allocated to two blocks $\mv{\theta} = (\mv{\theta}_A, \mv{\theta}_B)$---often the `parameters' and the `latent variables', respectively. For the first block $A$ we use the regular average observed information matrix $\left(\mathcal{I}_3 \right)_{AA}$, while for the second block $B$ we want a matrix that depends on $\mv{\theta}_{A}$.
A naive construction, like $\mathcal{I}_2(\mv{\theta})_{BB}$ or $\mathcal{I}_4(\mv{\theta})_{BB}$ for block $B$, would be impractical, because they typically depend on \emph{both} $\mv{\theta}_A$ and $\mv{\theta}_B$, and would therefore be incompatible with our hierarchical RMHMC.
Further are expensive to evaluate (for $\mathcal{I}_2$), or lack positive-definiteness guarantees (for $\mathcal{I}_4$).

In \cite{kleppe2019dynamically}, Kleppe used the conditional observed information matrix:
\begin{equation}
\label{eq:conditional_fisher}
\mathcal{I}_{B\mid A}(\mv{\theta}_A)
\;=\;
\mathbb{E}_{\mv{\theta}_B}\!\left[-\nabla_{\mv{\theta}_B\mv{\theta}_B}^2\log \pi(\mv{\theta})\,\middle|\,\mv{\theta}_A\right],
\end{equation}
in a hierarchical manner, and in fact allowing for multi-block extensions.
Recently, Kleppe \cite{kleppe2024log} proposed approximating the conditional information matrix using a metric tensor that can be done efficiently with automatic differentiation.
In Section \ref{sec:mass_matrix_estimation}, we present our method to adaptively estimate the conditional information matrix.
The overall idea is to formulate a distributional approximation of the joint density of $\mv{\theta}_A$ 
and the score corresponding to $\mv{\theta}_B$, that is, $\mv{g}_B=\nabla_{\mv{\theta}_B}\log \pi(\mv{\theta})$.
The approximation of the gradient is done using a flexible parametric approximation,
which we can estimate by minimizing the Kullback--Leibler divergence between their true joint density and the approximation model.
We focus on the case where $\mathcal{I}_{B\mid A}(\mv{\theta}_A)$ is approximated by a diagonal matrix, but stress that the proposed method gives a general framework where other, more complicated models can be used. 

The Kullback--Leibler minimization is in principle direct to implement in practice using a stochastic gradient approach, while running the (dynamic) RMHMC.
The method therefore belongs to the class of MCMC algorithms which adjust their parameters using past simulated samples, which are known as adaptive MCMC \citep[e.g.][]{haario2001,andrieu2008tutorial,roberts2009examples}. For instance \citep{haario2001,vihola2012robust,wallin2018efficient,atchade2006adaptive}  
adjust the proposal covariance of a random-walk Metropolis or MALA, which is equivalent to a global linear preconditioning, and \citep{bell2024adaptive} 
learn the parameters a stereographic projection. There are also algorithms which learn different parameters for a number of regions in the state space \citep[e.g.][]{craiu2009learn,pompe2020framework}, 
and for different temperature levels in parallel tempering \citep{miasojedow2013adaptive}.

While our adaptation (mass matrix estimation) is in principle of now-standard stochastic gradient type \citep[cf.][]{andrieu2008tutorial}, it is relatively well-known that HMC can be challenging for adaptive MCMC.
For instance, the standard adaptation strategy for random-walk Metropolis based on the estimated target covariance \citep{haario2001,andrieu-moulines} is relatively well-behaved \citep[cf.][]{saksman-vihola,vihola-collapse}, but can behave poorly with HMC (and MALA) (see, e.g. \cite{livingstone2022barker,tran-kleppe} and the concurrent work \cite{seyboldt-carlson-carpenter}).
We have implemented new stabilization mechanisms into the adaptation, which we discuss in Section \ref{sec:full-algorithm} and Appendix \ref{sec:clipping-details}. They can be of their independent interest for developing other types of adaptive HMC algorithms. 
To assess the performance of the proposed hierarchical RMHMC method with adaptive mass-matrix estimation, 
we perform a series of numerical experiments in Section \ref{sec:experiment}, and conclude with a discussion in Section \ref{sec:discussion}.

\section{Riemannian manifold Hamiltonian Monte Carlo}
\label{sec:rmhmc}

The Riemannian manifold HMC (RMHMC) \citep{GirolamiCalderhead2011} is a generalization of the HMC by allowing for a non-constant mass matrix. 
That is, one defines a position-dependent mass matrix $\mv{M}(\mv{\theta})\in\mathbb{R}^{d\times d}$ which aims for adapting to the local geometry of the target density. 
The mass matrices need to be symmetric and positive definite: $\mv{M}(\mv{\theta})\!\succ\! 0$ for all $\mv{\theta}$, and the mapping $\mv{\theta}\mapsto \mv{M}(\mv{\theta})$ must be differentiable. 
The Hamiltonian is defined as follows:
\begin{align}
  H(\mv{\theta},\mv{p})
  \;=\;
  U(\mv{\theta})
  \;+\;
  \tfrac{1}{2}\,\mv{p}^\mathsf{T}\mv{M}(\mv{\theta})^{-1}\mv{p}
  \;+\;
  \tfrac{1}{2}\,\log\!\big|\mv{M}(\mv{\theta})\big|,
  \label{eq:rmhmc_hamiltonian}
\end{align}
which corresponds to the conditional momentum law
$\mv{p}\,|\,\mv{\theta}\sim\mathcal{N}\!\big(\mv{0},\mv{M}(\mv{\theta})\big)$ and the augmented target density
\[
  \pi(\mv{\theta},\mv{p}) \;\propto\; \pi(\mv{\theta})\,\big|\mv{M}(\mv{\theta})\big|^{-1/2}
  \exp\!\Big\{-\tfrac{1}{2}\mv{p}^\mathsf{T}\mv{M}(\mv{\theta})^{-1}\mv{p}\Big\}.
\]
Note that with constant mass $\mv{M}(\mv{\theta}) = \mv{M}$, the last term in \eqref{eq:rmhmc_hamiltonian} becomes redundant, and we recover the usual HMC, so RMHMC is a proper generalization of the HMC.

The \emph{ideal} Hamiltonian dynamics corresponding to Hamiltonian \eqref{eq:rmhmc_hamiltonian} are:
\begin{align}
  \dot{\mv{\theta}} \;=\; \nabla_{\mv{p}} H(\mv{\theta},\mv{p}) \;=\; \mv{M}(\mv{\theta})^{-1}\mv{p},
  \qquad
  \dot{\mv{p}} \;=\; -\nabla_{\mv{\theta}} H(\mv{\theta},\mv{p}),
  \label{eq:rmhmc_flow}
\end{align}
where, writing $\mv{v}=\mv{M}(\mv{\theta})^{-1}\mv{p}$ and $\partial_i\equiv \partial/\partial\theta_i$, the $i$th component of $\dot{\mv{p}}$ is
\begin{align*}
  \dot{p}_i
  \;=\;
  -\,\partial_i U(\mv{\theta})
  \;+\;\tfrac{1}{2}\,\mv{v}^\mathsf{T}\big(\partial_i \mv{M}(\mv{\theta})\big)\mv{v}
  \;-\;\tfrac{1}{2}\,\mathrm{tr}\!\Big(\mv{M}(\mv{\theta})^{-1}\,\partial_i \mv{M}(\mv{\theta})\Big).
\end{align*}
Let $\Phi_\tau$ denote the exact time-$\tau$ Hamiltonian flow map associated with \eqref{eq:rmhmc_flow}. 
If $\Phi_\tau$ could be evaluated exactly, one could define an \emph{exact} RMHMC update that leaves $\pi(\mv{\theta},\mv{p})$ invariant (see \citealp[Theorem~5.1]{bou2018geometric}) by:
draw $\mv{p}^{*} \sim \mathcal{N}\!\big(\mv{0},\mv{M}(\mv{\theta}^{(t)})\big)$ and set
\[
(\mv{\theta}^{(t+1)},\mv{p}^{(t+1)}) \;=\; \Phi_\tau\!\big(\mv{\theta}^{(t)},\mv{p}^{*}\big).
\]

However, in practice the dynamics in \eqref{eq:rmhmc_flow} cannot be solved exactly (see \cite{pakman2014exact} for a notable exception), so one must use numerical integration. 
Given a step size $\epsilon$ and number of steps $L$, the integrator evolves the Hamiltonian dynamics for a total integration time $\tau = L\epsilon$.
A standard choice is the \emph{leapfrog} family of symplectic, time-reversible maps. 
For standard HMC (with a separable Hamiltonian), the leapfrog updates are explicit. 
In RMHMC, however, the position-dependent term typically yields a non-separable Hamiltonian, and the resulting \emph{generalized leapfrog} step is implicit and must be solved numerically.

For the generalized leapfrog, define
\begin{align}
\label{eq:XI}
\Xi(\mv{\theta},\mv{p})
&= -\nabla U(\mv{\theta})
+ \tfrac{1}{2}
\begin{bmatrix}
\mv{v}^\mathsf{T}(\partial_1 \mv{M}(\mv{\theta}))\mv{v} - \mathrm{tr}\!\big(\mv{M}(\mv{\theta})^{-1}\partial_1 \mv{M}(\mv{\theta})\big)\\[-2pt]
\vdots\\[-2pt]
\mv{v}^\mathsf{T}(\partial_d \mv{M}(\mv{\theta}))\mv{v} - \mathrm{tr}\!\big(\mv{M}(\mv{\theta})^{-1}\partial_d \mv{M}(\mv{\theta})\big)
\end{bmatrix}.
\end{align}
Algorithm~\ref{alg:genlf_step} gives pseudo-code for one implicit, symmetric generalized leapfrog step $\Psi_{\epsilon}$.

\begin{algorithm}
\caption{Generalized leapfrog step $\Psi_{\epsilon}$ (implicit, symmetric)}
\label{alg:genlf_step}
\begin{algorithmic}[1]
\Require Current state $(\mv{\theta}^{(n)},\mv{p}^{(n)})$, step size $\epsilon$
\Require Implicit solver $\textsc{Solve}(\cdot)$ 
\State $\mv{p}^{(n+\tfrac12)} \leftarrow
\textsc{Solve}\!\Big(\mv{p}=\mv{p}^{(n)}+\tfrac{\epsilon}{2}\,\Xi(\mv{\theta}^{(n)},\mv{p})\Big)$
\State $\mv{\theta}^{(n+1)} \leftarrow
\textsc{Solve}\!\Big(\mv{\theta}=\mv{\theta}^{(n)}+\tfrac{\epsilon}{2}\big(\mv{v}(\mv{\theta}^{(n)},\mv{p}^{(n+\tfrac12)})
+\mv{v}(\mv{\theta},\mv{p}^{(n+\tfrac12)})\big)\Big)$
\State $\mv{p}^{(n+1)} \leftarrow \mv{p}^{(n+\tfrac12)}
+\tfrac{\epsilon}{2}\,\Xi(\mv{\theta}^{(n+1)},\mv{p}^{(n+\tfrac12)})$
\State \Return $(\mv{\theta}^{(n+1)},\mv{p}^{(n+1)})$
\end{algorithmic}
\end{algorithm}

Because numerical integration introduces discretization error, the endpoint of the simulated trajectory cannot be accepted automatically. 
Instead, one apply a Metropolis--Hastings correction step. 
A basic RMHMC iteration is summarized in Algorithm~\ref{alg:simple_RMHMC_basic}.
\begin{algorithm}
\caption{One iteration of a basic Riemannian Manifold HMC}
\label{alg:simple_RMHMC_basic}
\begin{algorithmic}[1]
\Require Current position $\mv{\theta}^{(t)}$, step size $\epsilon$, number of steps $L$
\State Set $\mv{\theta}_0 \gets \mv{\theta}^{(t)}$ and simulate $\mv{p}_0 \sim \mathcal{N}\!\big(\mv{0}, \mv{M}(\mv{\theta}^{(t)})\big)$ 
\For{$n=1,\dots,L$} \Comment{Integrate dynamics}
  \State $(\mv{\theta}_{n},\mv{p}_{n}) \gets \Psi_{\epsilon}(\mv{\theta}_{n-1},\mv{p}_{n-1})$
\EndFor
\State $\alpha \gets \min\Bigl\{1,\exp\bigl(H(\mv{\theta}_0,\mv{p}_0)-H(\mv{\theta}_L,\mv{p}_L)\bigr)\Bigr\}$ 
\State With probability $\alpha$, set $\mv{\theta}^{(t+1)} \gets \mv{\theta}_L$; otherwise set $\mv{\theta}^{(t+1)} \gets \mv{\theta}_0$
\State \Return $(\mv{\theta}^{(t+1)})$
\end{algorithmic}
\end{algorithm}

Algorithm~\ref{alg:simple_RMHMC_basic} can be extended (and improved) in two directions.
First, instead of considering only the endpoint of the numerical trajectory, one may treat \emph{all} states along the trajectory as candidate proposals and select an index $I\in\{0,\dots,T\}$, for instance, using multinomial sampling:
\[
\mathbb{P}(I=t) \;\propto\; \exp\!\big\{-H(\mv{\theta}_t,\mv{p}_t)\big\}, \qquad t=0,\dots,T.
\]
Second, one can avoid fixing the integration time in advance by building the trajectory dynamically and, as in NUTS \citep{HoffmanGelman2014}, allowing for expanding it both forwards and backwards in (fictitious) time from an initial state $(\mv{\theta}^*,\mv{p}^*)$
 until a stopping criterion is met (e.g.\ until a u-turn condition triggers).
A full iteration of an abstract generalized RMHMC scheme is given in Algorithm~\ref{alg:genl_RMHMC}.

\begin{algorithm}
\caption{One iteration of Riemannian Manifold HMC (RMHMC)}
\label{alg:genl_RMHMC}
\begin{algorithmic}[1]
\Require Current position $\mv{\theta}^{(t)}$, step size $\epsilon$, 
\State Set $\mv{\theta}^* \gets \mv{\theta}^{(t)}$ and simulate $\mv{p}^* \sim \mathcal{N}\!\big(\mv{0}, \mv{M}(\mv{\theta}^{*})\big)$ 
\State $(\mv{\theta}_0,\mv{p}_0),(\mv{\theta}_1,\mv{p}_1),\ldots,(\mv{\theta}_T,\mv{p}_T) \sim \textsc{Trajectory}\big((\mv{\theta}^*,\mv{p}^*),\Psi_{\epsilon}) \big)$
\State $I \sim \textsc{Selection}\big((\mv{\theta}_0,\mv{p}_0),(\mv{\theta}_1,\mv{p}_1),\ldots,(\mv{\theta}_T,\mv{p}_T) \big)$
\State \Return $\mv{\theta}^{(t+1)} = \mv{\theta}_I$
\end{algorithmic}
\end{algorithm}

Reversibility of the resulting Markov transition with respect to the joint target $\pi(\mv{\theta},\mv{p})$ requires the same core ingredients as in HMC:
$\Psi_\epsilon$ should be symmetric (time-reversible) and volume-preserving, and the trajectory-building and selection rule must be constructed to satisfy detailed balance for $\pi(\mv{\theta},\mv{p})$; see for instance \citep{durmus-gruffaz-kailas-saksman-vihola}.
The NUTS-like RMHMC variant which we use in the experiments is detailed in Appendix~\ref{sec:nuts-trajectory}.

While RMHMC is an appealing generalization of HMC, the implicit solver in Algorithm~\ref{alg:genlf_step} 
makes it substantially more expensive than standard HMC.
Although the Metropolis--Hastings correction ensures validity of the method for any step and trajectory length parameters $(\epsilon,L)$, the implicit solver tolerance must be small: an inaccurate solver can break the symmetry/reversibility of $\Psi_\epsilon$ and may thereby introduce a bias. Finally, nonlinear solvers may impose additional stability constraints that further restrict the usable step size $\epsilon$, to avoid convergence failures.

\section{Hierarchical Riemannian manifold HMC}
\label{sec:hierarchical_rhmhc}

We consider the special case of RMHMC where the mass matrix $\mv{\theta}$ admits a specific hierarchical structure, which stems naturally from hierarchical target distributions. 
For the sake of exposition, we first focus on the case of two blocks: an upper hierarchical block of variables $\mv{\theta}_A$ that controls the local scale/curvature of the remaining block $\mv{\theta}_B$. 
The funnel \eqref{eq:funnel_distribution} is a prototypical example of this, with $\mv{\theta}_A=v$ and $\mv{\theta}_B=\mv{x}$.

\paragraph{General two-block hierarchical mass matrix}

By exploiting the structure, one can use a block-diagonal mass matrix  \cite{zhang2014semi,kleppe2019dynamically,kleppe2024log}:
\begin{equation}
  \mv{M}(\mv{\theta}) \;=\;
  \begin{bmatrix}
    \mv{M}_A & \mv{0} \\
    \mv{0}   & \mv{M}_B(\mv{\theta}_A)
  \end{bmatrix},
  \qquad
  \mv{M}_A \succ 0,\;\; \mv{M}_B(\mv{\theta}_A)\succ 0.
  \label{eq:hierarchical_metric}
\end{equation}
In words, $\mv{\theta}_A$ evolves in a constant (Euclidean) geometry while $\mv{\theta}_B$ adapts through $\mv{M}_B(\mv{\theta}_A)$. 
The Hamiltonian is identical to the regular RMHMC \eqref{eq:rmhmc_hamiltonian}, but thanks to the block structure, the generalized Hamiltonian  admits an explicit symmetric splitting integrator.

To that end, using the hierarchical structure, we can write $\mv{v} = [\mv{v}_A,\mv{v}_B]=[\mv{M}_A^{-1}\mv{p}_A,\mv{M}_B(\mv{\theta}_A)^{-1}\mv{p}_B]$, and similarly decompose $\Xi(\mv{\theta},\mv{p})$ from \eqref{eq:XI} into two parts $\Xi(\mv{\theta},\mv{p}) = [\Xi_A(\mv{\theta},\mv{p}_B)^\top, \Xi_B(\mv{\theta})^\top]^\top$.
 Since
$\partial_{\mv{\theta}_i}\mv{M}(\mv{\theta})=0$ for all $i\in B$, we have $ \Xi_B(\mv{\theta}) = -\nabla_{\mv{\theta}_B}U(\mv{\theta})$.
Let $A_i$ denote the $i$th element of of block $A$, then
\begin{align}
  \left[\Xi_A(\mv{\theta},\mv{p}_B) \right]_i
  &= -\nabla_{\mv{\theta}_{A_i}}U(\mv{\theta})
  + \tfrac{1}{2}\Big(\mv{v}_B^\top \left[\partial_{\mv{\theta}_{A_i}}\mv{M}_B(\mv{\theta}_A) \right]\mv{v}_B
  -\tr\!\big[\mv{M}_B(\mv{\theta}_A)^{-1}\partial_{\mv{\theta}_{A_i}}\mv{M}_B(\mv{\theta}_A)\big]\Big).
  \label{eq:xi_A_B_repeat}
\end{align}
This removes the implicit coupling in the generalized leapfrog and yields an explicit, symmetric and volume-preserving update, which is given in Algorithm~\ref{alg:hier_lf_step}. 
\begin{algorithm}
\caption{Hierarchical generalized leapfrog step $\Psi_{\epsilon}$ (explicit)}
\label{alg:hier_lf_step}
\begin{algorithmic}[1]
\Require Current state $(\mv{\theta}^{(n)},\mv{p}^{(n)})$, step size $\epsilon$
\State $\mv{p}_B^{(n+\tfrac12)} \leftarrow \mv{p}_B^{(n)} + \tfrac{\epsilon}{2}\,\Xi_B(\mv{\theta}_A^{(n)},\mv{\theta}_B^{(n)})$
\State $\mv{p}_A^{(n+\tfrac12)} \leftarrow \mv{p}_A^{(n)} + \tfrac{\epsilon}{2}\,\Xi_A(\mv{\theta}_A^{(n)},\mv{\theta}_B^{(n)},\mv{p}_B^{(n+\tfrac12)})$
\State $\mv{\theta}_A^{(n+1)} \leftarrow \mv{\theta}_A^{(n)} + \epsilon\,\mv{M}_A^{-1}\mv{p}_A^{(n+\tfrac12)}$
\State $\mv{\theta}_B^{(n+1)} \leftarrow \mv{\theta}_B^{(n)} + \tfrac{\epsilon}{2}\Big(
  \mv{M}_B(\mv{\theta}_A^{(n)})^{-1} + \mv{M}_B(\mv{\theta}_A^{(n+1)})^{-1}\Big)\mv{p}_B^{(n+\tfrac12)}$
\State $\mv{p}_B^{(n+1)} \leftarrow \mv{p}_B^{(n+\tfrac12)} + \tfrac{\epsilon}{2}\,\Xi_B(\mv{\theta}_A^{(n+1)},\mv{\theta}_B^{(n+1)})$
\State $\mv{p}_A^{(n+1)} \leftarrow \mv{p}_A^{(n+\tfrac12)} + \tfrac{\epsilon}{2}\,\Xi_A(\mv{\theta}_A^{(n+1)},\mv{\theta}_B^{(n+1)},\mv{p}_B^{(n+\tfrac12)})$
\State \Return $(\mv{\theta}^{(n+1)},\mv{p}^{(n+1)})$
\end{algorithmic}
\end{algorithm}
The following result ensures that integrator defined by Algorithm \ref{alg:hier_lf_step} has the required properties so that it can be used within the RMHMC Algorithm \ref{alg:genl_RMHMC}, while keeping the RMHMC step remains $\pi$-reversible:
\begin{theorem}
\label{thm:symplectic_reversible_hier_leapfrog}
The $\Psi_\epsilon$ defined by Algorithm \ref{alg:hier_lf_step} is (i) symmetric (time-reversible) and (ii) volume-preserving.
\end{theorem}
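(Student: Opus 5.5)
The plan is to write $\Psi_\epsilon$ as a symmetric composition of simpler maps, each of which is either an involution-compatible shear or an explicitly invertible map whose Jacobian determinant is one. Concretely, Algorithm~\ref{alg:hier_lf_step} factors as
\[
\Psi_\epsilon = \mathcal{B}_B\circ\mathcal{B}_A\circ\mathcal{D}\circ\mathcal{B}_A\circ\mathcal{B}_B ,
\]
read right to left. The maps are defined as follows.
\begin{itemize}
\item $\mathcal{B}_B$ sends $\mv{p}_B\mapsto \mv{p}_B+\tfrac{\epsilon}{2}\Xi_B(\mv{\theta})$ and leaves everything else fixed.
\item $\mathcal{B}_A$ sends $\mv{p}_A\mapsto\mv{p}_A+\tfrac{\epsilon}{2}\Xi_A(\mv{\theta},\mv{p}_B)$ and leaves everything else fixed.
\item $\mathcal{D}$ is the drift consisting of lines 3--4. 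It is $\mv{\theta}_A\mapsto\mv{\theta}_A+\epsilon\mv{M}_A^{-1}\mv{p}_A$ followed by $\mv{\theta}_B\mapsto\mv{\theta}_B+\tfrac{\epsilon}{2}(\mv{M}_B(\mv{\theta}_A)^{-1}+\mv{M}_B(\mv{\theta}_A')^{-1})\mv{p}_B$, with both momenta held fixed.
\end{itemize}
One first checks that this factorization reproduces the algorithm. Lines 5--6 use $\mv{p}_B^{(n+1)}$ in $\Xi_A$ if one reads $\mathcal{B}_A$ after $\mathcal{B}_B$. The algorithm instead writes $\mv{p}_B^{(n+\frac12)}$ in line 6, so I would order the outer half-steps as $\mathcal{B}_B\circ\mathcal{B}_A$ on the right exit. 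Whichever ordering is correct, the point is that the composition is palindromic, and I would verify the exact order against the indices before proceeding.

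For (ii), each map has unit Jacobian determinant.
\begin{itemize}
\item $\mathcal{B}_B$ is a shear: it adds to $\mv{p}_B$ a function of $\mv{\theta}$ only, so its Jacobian is block unit-triangular.
\item $\mathcal{B}_A$ adds to $\mv{p}_A$ a function of $(\mv{\theta},\mv{p}_B)$, which does not involve $\mv{p}_A$, so it is again unit-triangular.
\item $\mathcal{D}$ is a composition of two shears. The first adds to $\mv{\theta}_A$ a function of $\mv{p}_A$. The second adds to $\mv{\theta}_B$ a function of $(\mv{\theta}_A^{\mathrm{old}},\mv{\theta}_A^{\mathrm{new}},\mv{p}_B)$, which can be rewritten as a function of $(\mv{\theta}_A,\mv{p}_A,\mv{p}_B)$ and is independent of $\mv{\theta}_B$.
\end{itemize}
In every case, ordering the variables suitably makes the Jacobian block triangular with identity diagonal blocks, so the determinant is $1$. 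Determinants multiply under composition, which gives volume preservation. This is also where the hierarchical structure is essential: $\mv{M}_B$ depends only on $\mv{\theta}_A$ and $\mv{M}_A$ is constant, which is exactly what makes each substep an explicit shear.

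For (i), the target is $R\circ\Psi_\epsilon\circ R=\Psi_\epsilon^{-1}$ with momentum flip $R(\mv{\theta},\mv{p})=(\mv{\theta},-\mv{p})$. Each factor satisfies $R\circ\mathcal{F}\circ R=\mathcal{F}^{-1}$.
\begin{itemize}
\item For $\mathcal{B}_B$ this is immediate.
\item For $\mathcal{B}_A$ it uses that $\Xi_A$ is even in $\mv{p}_B$, since $\mv{p}_B$ enters only through the quadratic form $\mv{v}_B^\top(\cdot)\mv{v}_B$. Flipping both momenta therefore turns the increment into a subtraction of the same quantity.
\item For $\mathcal{D}$, the flip turns the forward drift into the backward one. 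The $\mv{\theta}_B$-update is symmetric in $(\mv{\theta}_A^{\mathrm{old}},\mv{\theta}_A^{\mathrm{new}})$, which is exactly what is needed for it to be its own reverse. Concretely, starting from the output with flipped momenta, the $\mv{\theta}_A$ step returns to $\mv{\theta}_A^{(n)}$, and the trapezoidal average is unchanged, so $\mv{\theta}_B$ returns as well.
\end{itemize}
For a palindromic composition $F_1\circ\cdots\circ F_k\circ\cdots\circ F_1$ with each factor reversible in this sense, $R\Psi R$ equals the composition of the inverses in the same order, which is $\Psi^{-1}$.

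The main obstacle is the bookkeeping in lines 2 and 6. In both, $\Xi_A$ is evaluated at $\mv{p}_B^{(n+\frac12)}$, not at the current $\mv{p}_B$. The decomposition must therefore be arranged so that the $A$-momentum kicks act while $\mv{p}_B$ is frozen at its half-step value. This means putting $\mathcal{B}_A$ adjacent to $\mathcal{D}$ and the $\mathcal{B}_B$ kicks outermost on both sides. Equivalently, line 5 must be applied after line 6 for the composition to be palindromic. Since line 6 does not involve $\mv{p}_A$ or the new $\mv{p}_B$, lines 5 and 6 commute as written, and the palindromic form is valid. I would check this commutation explicitly; the remaining steps are routine.
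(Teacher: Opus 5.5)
Your proof is correct, but you factor the step differently from the paper.

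\textbf{The paper's route.} Via Theorem~\ref{thm:multi_symplectic_reversible_hier_leapfrog} in Appendix~\ref{sec:symmetry_verification}, the paper splits $H=H_0+H_A+H_B$, with $H_0=U+\tfrac12\log|\mv{M}|$ and $H_k=\tfrac12\mv{p}_k^\top\mv{M}_k^{-1}\mv{p}_k$. It then reads Algorithm~\ref{alg:hier_lf_step} as the palindromic composition $e^{\frac{\epsilon}{2}\mathcal{H}_0}\circ e^{\frac{\epsilon}{2}\mathcal{H}_B}\circ e^{\epsilon\mathcal{H}_A}\circ e^{\frac{\epsilon}{2}\mathcal{H}_B}\circ e^{\frac{\epsilon}{2}\mathcal{H}_0}$ of exact flows. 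In this reading, the trapezoidal $\mv{\theta}_B$-update is two half $H_B$-flows on either side of the $\mv{\theta}_A$-drift. Each $\Xi_A$-kick is the $\mv{p}_A$-part of the $H_0$-kick plus the $\mv{p}_A$-kick $\tfrac{\epsilon}{4}\mv{s}_{A\leftarrow B}$ produced by the $H_B$-flow. Symmetry and symplecticity (hence volume preservation) then follow from general splitting theory, and the argument carries over verbatim to the multi-block case.

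\textbf{Your route.} Your factors $\mathcal{B}_B,\mathcal{B}_A,\mathcal{D}$ follow the lines of the algorithm instead. They are not Hamiltonian flows and are generally not individually symplectic. For example, $\mathcal{B}_B$ kicks only $\mv{p}_B$, and it is symplectic only when the mixed block $\nabla^2_{\mv{\theta}_B\mv{\theta}_A}U$ vanishes. So your argument yields unit Jacobian determinant rather than symplecticity. That is exactly what (ii) asks for.

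\textbf{What your route buys.} The argument is elementary and self-contained. It verifies the momentum-flip reversibility $R\circ\Psi_\epsilon\circ R=\Psi_\epsilon^{-1}$ directly, and that is the property detailed balance actually uses. The paper's own notion of symmetry, $\Psi_\epsilon^{-1}=\Psi_{-\epsilon}$, follows from your factor-wise checks as well. Each factor satisfies $\mathcal{F}_\epsilon^{-1}=\mathcal{F}_{-\epsilon}$, with the trapezoidal symmetry handling $\mathcal{D}$.

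\textbf{Ordering of lines 5--6.} Your final resolution is right: line 6 must be applied before line 5, which is legitimate because line 6 uses $\mv{p}_B^{(n+\frac12)}$. You should drop the earlier remark ``whichever ordering is correct''. With the other order the composition would not be palindromic, and the reversibility argument would fail.
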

In Appendix \ref{sec:symmetry_verification} we prove an extension of Theorem \ref{thm:symplectic_reversible_hier_leapfrog} which allows for a more general block structure.

\paragraph{Diagonal mass for block B}

If the block B has a high dimension, calculating the matrix inverse $\mv{M}_B(\mv{\theta}_A)^{-1}$ in Algorithm \ref{alg:hier_lf_step} can be costly. For this reason, we will focus below on the special case where $\mv{M}_B(\mv{\theta}_A)$ is diagonal. In this case, the matrix inverses are direct to calculate, and the term $\Xi_A(\mv{\theta},\mv{p}_B)$ simplifies further. 
Let $M_i(\mv{\theta}_A)$ stand for the diagonal elements of $\mv{M}_B(\mv{\theta}_A)$, 
then one obtains:
\begin{align*}
\mv{v}_B^\top(\partial_{\mv{\theta}_{A_i}}\mv{M}_B(\mv{\theta}_{A}))\mv{v}_B
&= \sum_{j\in B} v_j^2\,\partial_{\mv{\theta}_{A_i}} M_j(\mv{\theta}_A),
\\
\tr\!\big(\mv{M}_B^{-1}(\mv{\theta}_A)\partial_{\mv{\theta}_{A_i}}\mv{M}_B(\mv{\theta}_A)\big)
&= \sum_{j\in B} M_j(\mv{\theta}_A)^{-1}\,\partial_{\mv{\theta}_{A_i}} M_j(\mv{\theta}_A).
\end{align*}
Using these, one can simplify the second term on the right of \eqref{eq:xi_A_B_repeat} with:
\begin{equation}
\label{eq:nuA_diag}
\tfrac{1}{2}\sum_{j\in B}\Big(v_j^2 - M_j(\mv{\theta}_A)^{-1}\Big)\,\nabla_{\mv{\theta}_{A_i}} M_j(\mv{\theta}_A)
\;=\;
\tfrac{1}{2}\sum_{j\in B}\Big(\tfrac{p_{B,j}^2}{M_j(\mv{\theta}_A)} - 1\Big)\,\nabla_{\mv{\theta}_{A_i}}\log M_j(\mv{\theta}_A).
\end{equation}

\paragraph{Multi-block extensions}

Some models may involve additional levels of hierarchy, which can be used to devise multi-block structures. The two block version of the hierarchical RMHMC presented above generalzed into multiple blocks. A sequential multi-block structure was studied by \citet{kleppe2019dynamically}.
In Appendix~\ref{sec:symmetry_verification}, we discuss a slightly more general multi-block extension which need not be sequential, and show that the corresponding integrator remains symmetric and volume-preserving.

\section{Mass matrix estimation}
\label{sec:mass_matrix_estimation}

The hierarchical mass matrix of the form \eqref{eq:hierarchical_metric} admits an efficient RMHMC implementation, but how to select the mass matrices in practice? Our approach to mass-matrix estimation is based on score vectors $\mv{g}(\mv{\theta}) = \nabla_{\mv{\theta}} U(\mv{\theta})$.
If $\mv{\theta}\sim\pi$, it is well-known that the score vectors 
$\mv{g} = \mv{g}(\mv{\theta})$ satisfy (under general conditions)
\begin{equation}
\mathbb{E}[\mv{g}] = \mv{0} \qquad\text{and}\qquad
\mathbb{E}[\mv{g}\mv{g}^\top] = \mathcal{I}_3, 
\label{eq:gradient_mean_general}
\end{equation}
where $\mathcal{I}_3$ is the average observed information matrix defined in Table~\ref{tab:info}. 
Titsias et al.~\cite{titsias2023optimal} estimate $\mathcal{I}_3$ using the empirical second moment
$\hat{\mathcal{I}}_3=\tfrac{1}{N}\sum_{n=1}^N \mv{g}^{(n)}(\mv{g}^{(n)})^\top$ with $\mv{g}^{(n)} = \mv{g}(\mv{\theta}^{(n)})$ obtained along an MCMC samples $\mv{\theta}^{(n)}$, and Hird and Livingstone \cite{hird-livingstone} show that $\mathcal{I}_3$ can be a good global preconditioner.  However, we are interested in position-dependent mass matrices, which cannot rely on one globally learned quantity.

\subsection{Hierarchical choice of mass matrix}
\label{subsec:mass_matrix_hier_choice}

Recall that we are interested in a mass matrix $\mv{M}$ of the form \eqref{eq:hierarchical_metric}, having block diagonal components $\mv{M}_A$ and, in particular, $\mv{M}_B(\mv{\theta}_A)$, which depends on the position. Namely, let us write the target in terms of its conditionals: $\pi(\mv{\theta}) = \pi_A(\mv{\theta}_A) \pi_{B\mid A}(\mv{\theta}_B\mid \mv{\theta}_A)$. Then, it is easy to see that
$$
\mv{g}_B(\mv{\theta})
=\nabla_{\mv{\theta}_B}\log\pi(\mv{\theta})
= \nabla_{\mv{\theta}_B} \log \pi_{B\mid A}(\mv{\theta}_B\mid \mv{\theta}_A).
$$
That is, for each fixed $\mv{\theta}_A$, the vector $\mv{g}_B$ is a score of the conditional distribution, and therefore, 
under $\mv{\theta}\sim\pi$ it satisfies
\[
\mathbb{E}[\mv{g}_B\mid \mv{\theta}_A]=\mv{0},
\qquad
\mathbb{E}[\mv{g}_B \mv{g}_B^\top\mid \mv{\theta}_A]=\mathcal{I}_{B\mid A}(\mv{\theta}_A),
\]
where $\mathcal{I}_{B\mid A}(\mv{\theta}_A)$ is the average conditional Fisher information \eqref{eq:conditional_fisher}.

We consider a parametric approximation of the conditional score $\mv{g}_B$ given $\mv{\theta}_A$. Assume that $\mv{g}_B$ has conditional density $p_{B\mid A}(\uarg \mid \mv{\theta}_A)$.\footnote{For the distribution to be continuous, it is sufficient that the function $\mv{g}_B(\uarg\mid \mv{\theta}_A)$ is continuously differentiable and that $\det(\nabla_{\mv{\theta}_B} \mv{g}_B(\mv{\theta}_B\mid \mv{\theta}_A)) \neq 0$ almost everywhere. Note, however, that a continuous conditional distribution is not required by our \emph{method}, but only to ensure that it coincides with Kullback--Leibler minimization.} We fit a parametric approximate model $q_{B\mid A}$ which minimizes the overall Kullback--Leibler (KL) divergence:
\begin{equation}
  \mathrm{KL}\big(\pi_A(\mv{\theta}_A) p_{B\mid A}(\mv{g}_B\mid\mv{\theta}_A)\;\|\; \pi_A(\mv{\theta}_A) q_{B\mid A}(\mv{g}_B\mid \mv{\theta}_A)\big).
  \label{eq:kl-conditional-score}
\end{equation}

In particular, we consider approximate models $q_{B\mid A}(\uarg\mid \mv{\theta}_A)$ which correspond to a normal distribution $\mathcal{N}(\mv{0}, \mv{M}_B(\mv{\theta}_A))$ with conditional variance $\mv{M}_B(\mv{\theta}_A; \mv{\phi})\succ 0$ where $\mv{\phi}$ are the parameter for the variance. In this case, 
minimizing \eqref{eq:kl-conditional-score} is equivalent to minimizing the following expected loss (see Appendix \ref{app:conditional-kl}):
\begin{equation}
\E\big[\ell\big(\mv{\phi};\mv{\theta}_A,\mv{g}_B\big)\big], \quad\text{where}\quad
\ell\big(\mv{\phi};\mv{\theta}_A,\mv{g}_B\big)=\log\left|\mv{M}_B(\mv{\theta}_A;\mv{\phi})\right|
+ \mv{g}_B^\top \mv{M}_B(\mv{\theta}_A;\mv{\phi})^{-1} \mv{g}_B,
\label{eq:per_iter_loss_general}
\end{equation}
and where the expectation is with respect to $(\mv{\theta}_A, \mv{g}_B) \sim \pi_A(\mv{\theta}_A) p_{B\mid A}(\mv{g}_B\mid\mv{\theta}_A)$. 

The approximate minimizer of this loss can be found iteratively by a stochastic gradient method:
\begin{equation}
\mv{\phi}^{(n)}
=\mv{\phi}^{(n-1)}
-\eta_n\nabla\ell\big(\mv{\phi}^{(n-1)};\mv{\theta}_A^{(n)},\mv{g}_B^{(n)}\big),
\label{eq:sgd_update_general}
\end{equation}
where $\mv{\theta}_A^{(n)}$ and $\mv{g}_B^{(n)}$ are samples from $\pi_A(\mv{\theta}_A) p_{B\mid A}(\mv{g}_B\mid\mv{\theta}_A)$. As we do not have access to such perfect samples, we use samples from the same MCMC whose parameters we are adjusting. This renders our approach as an instance of Markovian stochastic approximation \citep[cf.][]{benveniste2012adaptive}, and in particular an adaptive MCMC with stochastic approximation dynamic \citep[e.g.][and references therein]{andrieu2008tutorial}.

For the position independent mass of the first block $\mv{M}_A$, we employ a similar approach as for the conditional above, aiming for minimizing $\mathrm{KL}\big(p_{A}(\mv{g}_A)\;\|\; q_{A}(\mv{g}_A)\big)$, where $q_A$ is a Gaussian, leading to a learning rule analogous to \eqref{eq:sgd_update_general}.

Finally, we note that the extension of the learning method discussed above extends directly 
to the multi-block case with arbitrary dependencies as discussed in the end of Section \ref{sec:hierarchical_rhmhc} in an obvious manner. Namely, the `block B' is replaced by `current block' and `block A' by `all other blocks'.

\subsection{Diagonal mass models}

We focus on the diagonal case, where the conditional mass has the form: 
\[
\mv{M}_B(\mv{\theta}_A;\mv{\phi})=\mathrm{diag}\{M_i(\mv{\theta}_A;\mv{\phi}_i): i\in B\},
\qquad M_i(\mv{\theta}_A;\mv{\phi}_i)>0,
\]
where $\mv{\phi}_i$ stand for parameters for each coordinate $i$ of the block $B$.
The diagonal form leads to 
the following per-iteration loss: 
\[
\ell\!\big(\mv{\phi};\mv{\theta}_A,g_B\big)
=\sum_{i\in B}\Big\{\log M_i(\mv{\theta}_A;\mv{\phi}_i)+\frac{g_{B,i}^2}{M_i(\mv{\theta}_A;\mv{\phi}_i)}\Big\},
\]
where ${g_B}_i$ are the coordinates of $\mv{g}_B$. The gradient of this loss with respect to $\mv{\phi}_i$ is:
\[
\nabla_{\mv{\phi}_i}\ell\!\big(\mv{\phi};\mv{\theta}_A,g_B\big)
=\Big(1-\frac{g_{B,i}^2}{M_i(\mv{\theta}_A;\mv{\phi}_i)}\Big)\,\nabla_{\mv{\phi}_i}\log M_i(\mv{\theta}_A;\mv{\phi}_i).
\]

We will present next the two specific parametric models considered further in this work.
In both cases, the correction needed by the integrator is obtained by substituting
$\nabla_{\mv{\theta}_A}\log M_i(\mv{\theta}_A)$ into the diagonal simplification \eqref{eq:nuA_diag} and is given explicit and 
computationally efficient form.

\paragraph{Block exponential model.}
The first model is a simple exponential form of the mass matrix:
\begin{equation}
\label{eq:exponential_model}
M_i(\mv{\theta}_A)=\exp\!\big(\mv{\phi}_i^\top x_i(\mv{\theta}_A)\big),
\qquad
\nabla_{\mv{\phi}_i}\log M_i(\mv{\theta}_A)=x_i(\mv{\theta}_A).
\end{equation}
In this case, $\nabla_{\mv{\theta}_A}\log M_i(\mv{\theta}_A)=\nabla_{\mv{\theta}_A}\big(\mv{\phi}_i^\top x_i(\mv{\theta}_A)\big)$, so
\eqref{eq:nuA_diag} becomes
\[
\nu_A(\mv{\theta}_A,\mv{p}_B)
=\tfrac{1}{2}\sum_{i\in B}\Big(\tfrac{p_{B,i}^2}{M_i(\mv{\theta}_A)}-1\Big)\,
\nabla_{\mv{\theta}_A}\!\big(\mv{\phi}_i^\top x_i(\mv{\theta}_A)\big).
\]
Therefore,  providing the the dimension of $\phi_i$ is $|\phi|$ for all $i$, the cost of computing $\nu_A(\mv{\theta}_A,\mv{p}_B)$ is $\mathcal{O}(|\phi|\,|B|)$.
\begin{example}[Funnel]
\label{ex:funnel_model}
For the funnel model \eqref{eq:funnel_distribution}, choosing $x_i(\mv{\theta}_A)=[1,v]^\top$ and $\mv{\phi}_i=[0,-1]^\top$ gives
$M_i(\mv{\theta}_A)=\exp(-v)=\mathcal{I}_{B\mid A,ii}(\mv{\theta}_A)$, that is, the diagonal oracle is contained in the model class. And we can expect the estimation procedure to recover this solution. 
\end{example}

\paragraph{Block sum-of-exponentials model.}
Our second model is a bit more flexible allowing for each diagonal element of the mass matrix to be a sum of two exponentials.
This is intended to encode masses arising from two components: one due to the prior part and the other for the likelihood.
\begin{equation}
\label{eq:sum_of_exponentials}
M_i(\mv{\theta}_A)
=\exp\!\big((\mv{\phi}^1_i)^\top x_{i,1}(\mv{\theta}_A)\big)
+\exp\!\big((\mv{\phi}^2_i)^\top x_{i,2}(\mv{\theta}_A)\big).
\end{equation}
Let $\eta_{i,k}(\mv{\theta}_A)=(\mv{\phi}^k_i)^\top x_{i,k}(\mv{\theta}_A)$ and
$w_{i,k}(\mv{\theta}_A)=\exp(\eta_{i,k}(\mv{\theta}_A))/M_i(\mv{\theta}_A)$ for $k\in\{1,2\}$.
Then
\[
\nabla_{\mv{\theta}_A}\log M_i(\mv{\theta}_A)
=w_{i,1}(\mv{\theta}_A)\,\nabla_{\mv{\theta}_A}\eta_{i,1}(\mv{\theta}_A)
+w_{i,2}(\mv{\theta}_A)\,\nabla_{\mv{\theta}_A}\eta_{i,2}(\mv{\theta}_A),
\]
and substituting into \eqref{eq:nuA_diag} yields the corresponding $\nu_A(\mv{\theta}_A,\mv{p}_B)$.
Provided that the vectors $\mv{\phi}_i^1$ and $\mv{\phi}_i^2$ have common dimension $|\phi|$ for all $i$, the cost of computing $\nu_A(\mv{\theta}_A,\mv{p}_B)$ remains $\mathcal{O}(|\phi|\,|B|)$.

\section{On the geometry of hierarchical models}
\label{sec:geometry_hierarchical_models}

As discussed in the introduction, hierarchical models often induce funnel-like structures, which are difficult for standard MCMC samplers. The non-centered parameterizations introduced in \citep{papaspiliopoulos2007general} can be used to mitigate these issues. While \citep{papaspiliopoulos2007general} focused on Gibbs samplers, these can be useful more generally. 
In Section \ref{sec:centered-and-noncentered}, we first examine the non-centered parameterization on the funnel example in the context of HMC.
Then, in Section \ref{sec:transformed-hmc}, we discuss following \cite{betancourt2019incomplete} how an appropriate transformation of the mass matrix induces Hamiltonian dynamics that are equivalent to those obtained by the non-centered reparameterization. 

\subsection{Centered and non-centered parameterization}
\label{sec:centered-and-noncentered}

Recall the Neal's funnel distribution \eqref{eq:funnel_distribution}, whose potential function is (up to an additive constant):
$$
U_C(v,\mv{x})=\tfrac12 e^{-v}\|\mv{x}\|^2 + \tfrac{d}{2}v+\tfrac{v^2}{18}.
$$
In the terminology of \citep{papaspiliopoulos2007general}, this is the centered parameterization (C) \(\mv{\theta}=(v,\mv{x})\) of the distribution.
Consider then the non-centered (NC) parameterization \(\tilde{\mv{\theta}}=(\tilde v,\tilde{\mv{x}})\) defined by \(\tilde v=v\), \(\tilde{\mv{x}}=e^{-v/2}\mv{x}\). That is, $\tilde{\mv{\theta}} = T(\mv{\theta})$ with $T^{-1}(\tilde{v}, \tilde{\mv{x}}) = (\tilde{v}, e^{\tilde{v}/2} \tilde{\mv{x}})$. Up to an additive constant, the NC potential is
\[
U_{NC}(\tilde v,\tilde{\mv{x}})
= U_C\big(T^{-1}(\tilde{v},\tilde{\mv{x}})\big) - \log \det \big(D T^{-1}(\tilde{v},\tilde{\mv{x}})\big)
= \tfrac12\|\tilde{\mv{x}}\|^2+\tfrac{\tilde v^{\,2}}{18},
\]
because the determinant of the (lower triangular) Jacobian $\det \big(DT^{-1}(\tilde{v},\tilde{\mv{x}})\big) = e^{d\tilde{v}/2}$
cancels the \((d/2)v\) term in the centered parameterization, and therefore removes the  linear `tilt' in \(v\). 

With a diagonal mass matrix \(\mv{M}=\mathrm{diag}(m_v,m_x\mv{I}_d)\) and Gaussian momenta \(p\sim\mathcal N(\mv{0},\mv{M})\), the kinetic energy  $K=\tfrac12 p^\top \mv{M}^{-1}p$ satisfies \(2K\sim\chi^2_{d+1}\) regardless of the values \((m_v,m_x)\). The typical values concentrate around the expected value \(\E[K]=(d+1)/2\) with typical fluctuations of order \(\mathrm{sd}(K)=\sqrt{(d+1)/2}\). 
This suggests an `energy budget' for a HMC trajectory: 
a typical trajectory starts near total energy \(H\approx U+\E[K]\) with fluctuations of order \(\mathrm{sd}(K)\), 
and any move that would increase \(\Delta U\) far beyond this budget is unlikely. In particular, let us focus on the typical energy changes in terms of \(v\), by considering the conditional expectations of the energies:
\[
U_C(v) = \E[U_C(v,\mv{X})\mid v]=\tfrac{d}{2}+\tfrac{d}{2}v+\tfrac{v^2}{18},
\qquad
U_{NC}(v)= \E[U_{NC}(v,\tilde{\mv{X}})\mid v]=\tfrac{d}{2}+\tfrac{v^2}{18},
\]
because \(\E[\|\mv{X}\|^2\mid v]=de^{v}\).

Note that the expected energy of the centered parameterization grows linearly in \(v\) while in the non-centered case the energy purely quadratic around \(v=0\). 
Figure~\ref{fig:funnel_energy_budget} shows $\Delta U_C(v)$ and $\Delta U_{NC}(v)$ for $d=4$, together with dashed lines indicating the mean and 95\% range of the kinetic energy $K$.
\begin{figure}[t]
  \centering
  \includegraphics[width=0.8\linewidth]{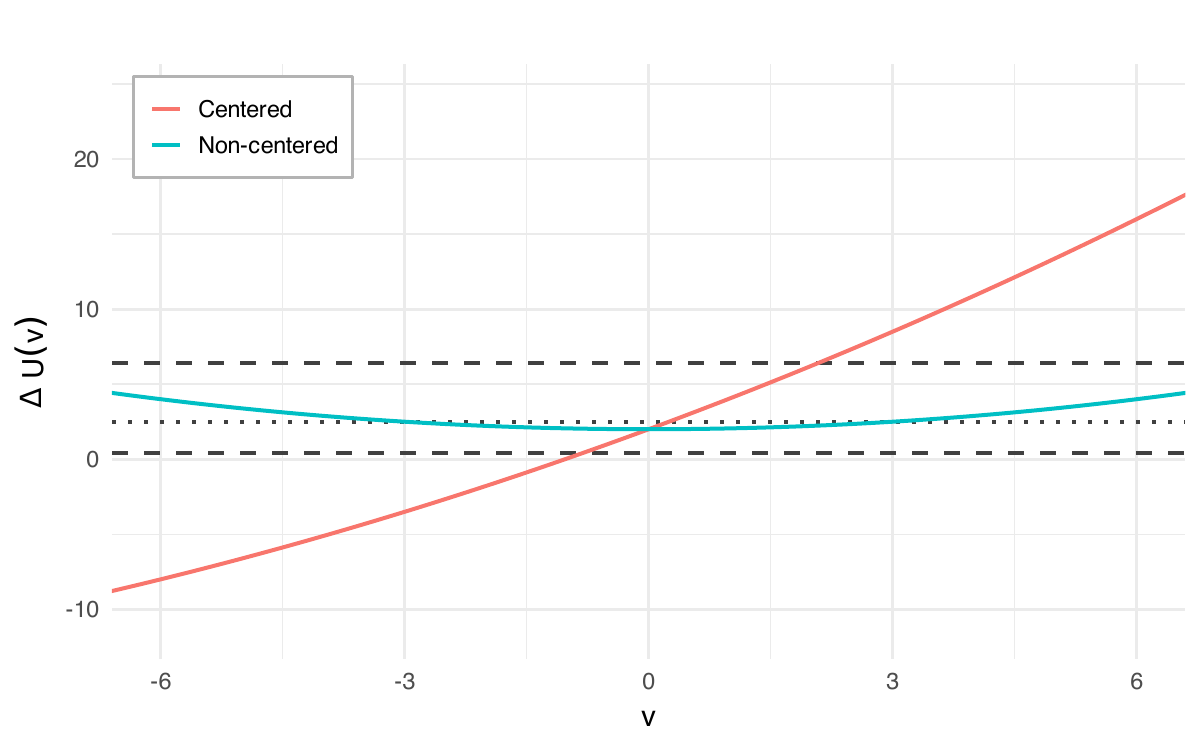}
 \caption{Excess expected potential $\Delta U(v)$ for centered and non-centered forms of the funnel distribution \eqref{eq:funnel_distribution} with $d=4$. The dashed lines indicate the mean and 95\% range of $K$ with $2K\sim\chi^2_{5}$. Regions where $|\Delta U(v)|$ are large are hard to reach in one HMC move. }
 \label{fig:funnel_energy_budget}
\end{figure}
Moving beyond the shown band in one HMC step requires a tail kinetic draw.
The key insight from this example is that the centered parameterization can induce a geometry which is difficult to sample even with \emph{ideal} HMC, that is, having a perfect integrator, and this can happen in a moderate dimension.
This motivates the need for geometry-altering transformations, like a non-centered reparameterization. 

\subsection{Connection to the mass matrix and invariance under reparameterization}
\label{sec:transformed-hmc}

It is possible to apply a smooth transformation (a change of variables) to induce a modified mass matrix which preserves the Hamiltonian dynamics \citet{betancourt2019incomplete}.
Let $\tilde{\mv{\theta}} = T(\mv{\theta})$ be a smooth, invertible transformation
with Jacobian $\mv{J}(\mv{\theta}) = DT(\mv{\theta})$.
To preserve the Hamiltonian dynamics under the change of variables,
the momentum must be transformed as $
\tilde{\mv{p}} = \mv{J}(\mv{\theta})^{-\top} \mv{p}$.
Consider the Hamiltonian $
H(\mv{\theta},\mv{p})
=
U(\mv{\theta})
+
\frac12 \mv{p}^\top \mv{M}(\mv{\theta})^{-1} \mv{p} .
$
Under the transformation, the potential becomes
$
\tilde U(\tilde{\mv{\theta}})
=
U(T^{-1}(\tilde{\mv{\theta}}))
-
\log\!\left|\det \mv{J}(T^{-1}(\tilde{\mv{\theta}}))\right|,
$
and invariance of the kinetic energy implies
$$
\widetilde{\mv{M}}(\tilde{\mv{\theta}})
=
\mv{J}(\mv{\theta})^{-\top}
\mv{M}(\mv{\theta})
\mv{J}(\mv{\theta})^{-1},
\qquad
\mv{\theta}=T^{-1}(\tilde{\mv{\theta}}).
$$
Therefore, reparameterizing the model is equivalent to keeping the
original coordinates but modifying the mass matrix accordingly.
In this sense, a non-centered parameterization can be viewed as an
implicit change of mass matrix.
Apart from the numerical integration error, the resulting Hamiltonian
dynamics describe the same target distribution
\citep{betancourt2019incomplete}.

We can use this result to understand difference between centered and non-centered parameterizations in the funnel example,
by looking at the mass matrix induced by the non-centered reparameterization when viewed in the original coordinates.
\begin{example}
The non-centered map $T$ discussed in Section \ref{sec:centered-and-noncentered} has the following Jacobian:
\[
\mv{J}(v,\mv{x})\;=\;
\begin{bmatrix}
1 & \mv{0}\\[4pt]
-\tfrac12 e^{-v/2}\mv{x} & e^{-v/2}\mv{I}_d
\end{bmatrix}.
\]
Suppose we run HMC in the non-centered parameterization with the simplest choice \(\tilde{\mv{M}}(\tilde{\mv{\theta}})\equiv \mv{I}_d\). 
The mass that makes the \emph{centered} parameterization equivalent is given by:
\begin{equation}
\mv{M}(v,\mv{x})=\mv{J}(v,\mv{x})^\top\tilde{\mv{M}}\mv{J}(v,\mv{x})=
\begin{bmatrix}
1+\tfrac14 e^{-v}\|\mv{x}\|^2 & -\tfrac12 e^{-v}\mv{x}^\top\\[4pt]
-\tfrac12 e^{-v}\mv{x} & e^{-v}\mv{I}_d
\end{bmatrix}.
\label{eq:reparam-matrix}
\end{equation}
\end{example}

In this article we focus on block diagonal mass matrices of the form \eqref{eq:hierarchical_metric}, which does not accommodate mass of the form \eqref{eq:reparam-matrix}, but suggests a block diagonal approximation of it. 
Indeed, consider the common diagonal mass in centered coordinates $(v,\mv{x})$:
\begin{equation}
\mv{M}(v,\mv{x})=\begin{bmatrix}
  M_A     & \mv{0} \\[2pt]
  \mv{0} & \mv{M}_B(v)  \mv{I}_d\end{bmatrix} \quad\text{where}\quad M_A>0\text{ and }\mv{M}_B(v) = e^{-\phi v} \mv{I}_d.
  \label{eq:block-mass}
\end{equation}
This may be viewed as an approximation of \eqref{eq:reparam-matrix}, enforcing off-diagonal blocks zero, and by replacing the first part by a constant (average value of the mass).

Finally, we note that \eqref{eq:block-mass} cannot correspond to a reparameterization. Namely, if this were $\mv{J}^{\!\top}\mv{J}$ for some $T$, a natural candidate \emph{point-wise} is the diagonal field $\mv{J}(v,\mv{x})=\mathrm{diag}\big(1,e^{-v/2}\mv{I}_d\big)$.
But $\mv{J}$ cannot be a Jacobian: the mixed partials do not commute since
$\partial_v(\partial_{x_i}\tilde x_i)=\partial_v(e^{-v/2})=-\tfrac12 e^{-v/2}\neq 0=\partial_{x_i}(\partial_v\tilde x_i)$.
More fundamentally, the metric induced by $\mv{M}$ has \emph{negative curvature}, so it cannot be the pullback of a flat (Euclidean) metric.

\section{Full algorithm and stabilization mechanisms}
\label{sec:full-algorithm}

Before the experiments, we outline the full adaptive RMHMC algorithm in Algorithm \ref{alg:method_overview}.
Lines \ref{line:b-block-1}--\ref{line:b-block-2} implement the estimation of the hierarchical component $\mv{M}_B(\mv{\theta}_A)$, and lines \ref{line:a-block-1}--\ref{line:a-block-2} implement the estimation of the mass matrix $\mv{M}_A$ similarly, with the exception that it is not allowed to depend on $\mv{\theta}$ at all.
Line \ref{line:clipping-mean} implements two stabilization mechanisms: mean adaptation and gradient clipping. 
\begin{algorithm}[t]
\caption{Overview of the method (hierarchical RMHMC metric adaptation)}
\label{alg:method_overview}
\begin{algorithmic}[1]
\Require Target $\pi(\mv{\theta})$; block split $\mv{\theta}=(\mv{\theta}_A,\mv{\theta}_B)$; variance models $\{M_i(\mv{\theta}_A;\phi_i)\}$ (e.g.\ \eqref{eq:sum_of_exponentials}); step size $\epsilon$; iterations $N$; learning rates $\{\eta_k\}$; clipping level $C_{\mv{g}}$.
\State Initialize $\mv{\theta}^{(0)}$, parameters $\mv{\phi}^B=\{\mv{\phi}^B_i\}$, $\mv{\phi}^A=\{\phi^A_i\}$, running mean gradient $\bar{\mv{g}}\gets \mv{0}$.
\For{$k=1,\dots,N$}
  \State \textbf{Metric:} $\mv{M}_B(\theta_A)\gets \mathrm{diag}\!\big(M_i(\mv{\theta}_A;\mv{\phi}^B_i)\big)$, $\mv{M}_A\gets \mathrm{diag}\!\big(M_i(\phi^A_i)\big)$
  \State  $\mv{M}(\theta)\gets \mathrm{diag}\!\big(\mv{M}_A,\;\mv{M}_B(\mv{\theta}_A)\big)$ \hfill (\ref{eq:hierarchical_metric})
  \State \label{item:advance-rmhmc} \textbf{Advance:} $\mv{\theta}^{(k)} \gets \mathrm{RMHMCStep}\!\big(\mv{\theta}^{(k-1)}; \mv{M}, \epsilon\big)$
     \State \label{line:clipping-mean} $\mv{g} \gets \nabla_{\mv{\theta}}\log\pi(\mv{\theta}^{(k)})$;\quad 
     $\bar{\mv{g}}\gets \text{mean-est}(\bar{\mv{g}},\mv{g},\eta_k)$;\quad $\tilde{\mv{g}}\gets \operatorname{clip}(\mv{g}-\bar{\mv{g}},C_{\mv{g}})$ \hfill (\ref{eq:mean})
     \For{$i\in B$}
     \label{line:b-block-1}
        \State \label{line:b-block-2}
        $\phi^B_i\gets \phi^B_i+\eta_k \Big(1-\frac{\tilde g_{B,i}^2}{M_i \left(\mv{\theta}_A^{(k)};\phi_i \right)}\Big)\nabla_{\phi^B_i}\log M_i \left(\mv{\theta}_A^{(k)};\phi^B_i \right)$ \hfill (\ref{eq:conditional_fisher}, \ref{eq:sgd_update_general})
     \EndFor
     \For{$i\in A$} \label{line:a-block-1}
        \State \label{line:a-block-2} $\phi^A_i\gets \phi^A_i+\eta_k \Big(1-\frac{\tilde g_{A,i}^2}{M_i \left(\phi^A_i \right)}\Big)\nabla_{\phi^A_i}\log M_i \left(\phi^A_i \right)$
     \EndFor
\EndFor
\State \textbf{Output:} $\{\mv{\theta}^{(k)}\}_{k=1}^N$.
\end{algorithmic}
\end{algorithm}
Gradient clipping is common in machine learning \cite[e.g.][]{pascanu2013difficulty}, and means that the norm of gradients are capped to a threshold $C_{\mv{g}}$, which we set to a pre-defined quantile which is itself learned adaptively; see Appendix \ref{sec:clipping-details} for details of the gradient clipping. We use the step size sequence $\eta_k = (k + n_0)^{-\kappa}$ with $n_0=5$ and $\kappa = 0.75$ in the experiments. The integrator step size $\epsilon$ used in line \ref{item:advance-rmhmc} of Algorithm \ref{alg:method_overview} is also adapted at each iteration based on the observed Hamiltonians by a Robbins--Monro stochastic approximation (with adaptive step sizes); see Appendix \ref{app:stepsize-adaptation} for details.

Mean estimation is, to our knowledge, a new method for stabilization, which can be useful in other applications, when learning score gradients in a Markov chain context. We know that \emph{under stationarity} the expected value of the gradient is exactly zero, so it may seem counter-intuitive to estimate the gradient. However, during the initial phase of the algorithm, the mean of the gradients can be atypically large, and point to the same direction for a long time. This can cause initial instability to the estimation of the mass matrix.
To mitigate such a scenario, we employ mean adaptation, that is, estimate the running mean $\mv{g}$ using standard Robbins--Monro convex combinations:
\begin{align}
\label{eq:mean}
\text{mean-est}(\bar{\mv{g}}, \mv{g}, \eta_k)
= (1-\eta_k)\bar{\mv{g}}+\eta_k \mv{g}
\end{align}
and use the `centered' gradient $\tilde{\mv{g}}$ to estimate information matrices.

We stress that Algorithm \ref{alg:method_overview} does not include parameters that require task-specific tuning of parameters, other than specifying the hierarchical mass matrix model.
In the experiments, we have tested the algorithm without some of the stabilization mechanisms; they can be turned off by using $C_{\mv{g}}=\infty$ (no clipping) and setting $\text{mean-est}(\uarg) \equiv \mv{0}$ (no mean estimation).

\section{Experiments}
\label{sec:experiment}

This section is devoted to exploring several experiments with four models, where we investigate the practical performance and properties of Algorithm \ref{alg:method_overview}. 
We first consider the Neal's funnel \eqref{eq:funnel_distribution} In Section \ref{sec:funnel-experiment}, where the optimal mass matrix is known, and verify that our estimate converges toward it; we also (unsurprisingly) find that a block structure outperforms diagonal adaptation. 
We then move to a more complex and practically relevant horseshoe-prior model, chosen because it induces dimension-wise funnels, and again find strong performance, especially under a more advanced parameterization, which account for the model has a prior and likelihood part both effecting the mass matrix; see Section \ref{subsec:mass_matrix_hier_choice} for details. 

Our third example is a stochastic volatility model with real financial dataset.
In this example, we investigate the impact of the dependence among latent variables; 
since \citep{hird-livingstone} notes that diagonal preconditioning can be worse than none, 
we examine this setting and find that both diagonal and block mass matrices improve over no preconditioning,
 with the block mass matrix again performing the best. 
 In the same example we also test whether a more general NUTS stopping criterion helps, 
 motivated by the fact that the standard criterion assumes Euclidean geometry and a generalization has been proposed; 
 here, however, the Euclidean criterion performs better, for reasons that are unclear.

Finally, we study a random-effects model with negative binomial measurements, a small extension of the example in \cite{livingstone2022barker}, and observe a clear benefit from estimating the mean when learning the mass matrix (see Section \ref{sec:full-algorithm}). 
Namely, without mean estimation, the method may fail to converge fast because the large initial gradients can severely distort the mass-matrix estimate, whereas mean estimation improves robustness.

All the experiments are implemented in Python, using the JAX library \citep{jax2018github}. Throughout the experiments, we report effective sample size (ESS) estimates computed with ArviZ library \cite{Martin2026}, which uses the Geyer's initial sequence estimator for the asymptotic variance of reversible Markov chains \cite{Geyer1992_practicalmcmc}.

\subsection{Neal's Funnel}
\label{sec:funnel-experiment}

Our first experiment is about the funnel distribution \eqref{eq:funnel_distribution}
letting the dimension of $\mv{x}$ be $20$.
This target is an extreme but representative example of hierarchical models in which the scale of one set of parameters depends strongly on another, producing a funnel-shaped geometry that is challenging for generic MCMC algorithms. 
Figure \ref{fig:funnel_example} illustrates this geometry via the joint scatter plot of $v$ and $x_1$.
For the proposed method, this model is particularly well-suited because the hierarchical structure is explicit and $v$ directly controls the scaling of $\mv{x}$; see the discussion in Section \ref{sec:centered-and-noncentered}.

We compare three algorithms:
{\bf Block exponential}: the proposed block-adaptive NUTS with adaptive step size;
{\bf Diagonal}: NUTS with adaptive diagonal mass matrix, $\mathrm{diag}(\hat{\mathcal{I}}_3)$, and adaptive step size;
and {\bf HMC}: Standard HMC with a fixed step size, chosen sufficiently small to allow the sampler to explore most of the parameter space ($\epsilon=0.01$).

For each of the methods, we run $50,000$ iterations after an initial $10,000$ burn-in steps.
Figure \ref{fig:funnel_histogram} shows that only the proposed method managed to explore the full marginal distribution of $v$, while
the HMC has under-explored the tail of $v$, and the sample of the diagonal method is not representative at all.

Table \ref{tab:ess_per_grad_comparison} reports the ESS per 1000 gradient evaluations; 
for both $v$ and $\mv{x}$, the block-adaptive method is clearly the most efficient.
 The diagonal method appears unreliable in this setting, likely because it fails to traverse the full funnel geometry.
  Finally, Figure \ref{fig:funnel_parameter} shows that the mass parameters, $\phi_i$, used to define the block \eqref{eq:exponential_model} for the Block exponential method converge towards their theoretical values derived in Example \ref{ex:funnel_model}. 
\begin{figure}
  \centering
  \includegraphics[scale=0.5]{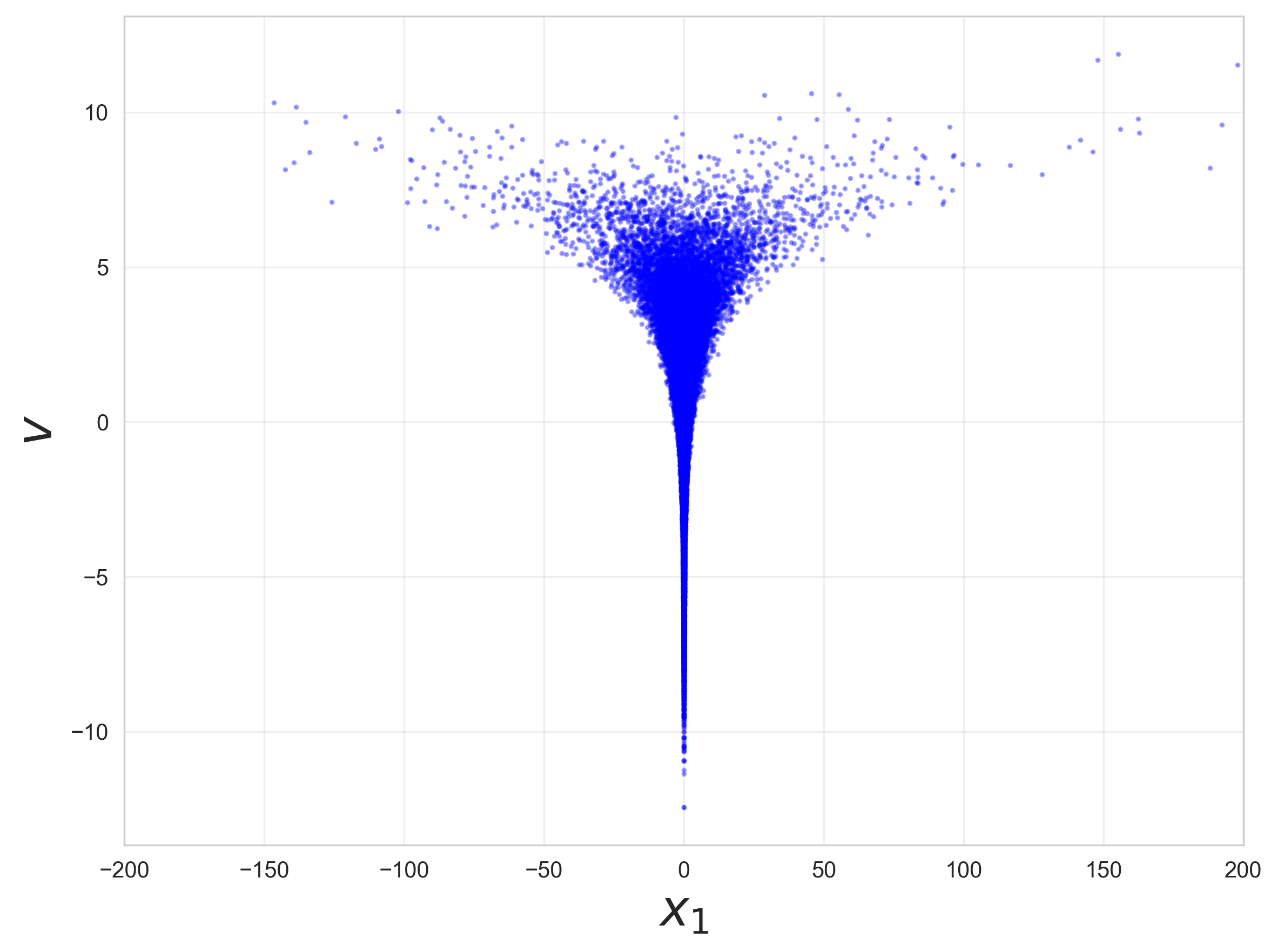}
  \caption{Samples from the funnel distribution. The y-axis is location of samples of $v$, and the x-axis is samples of $x_1$.}
  \label{fig:funnel_example}
\end{figure}

\begin{figure}
  \centering
  \includegraphics[scale=0.5]{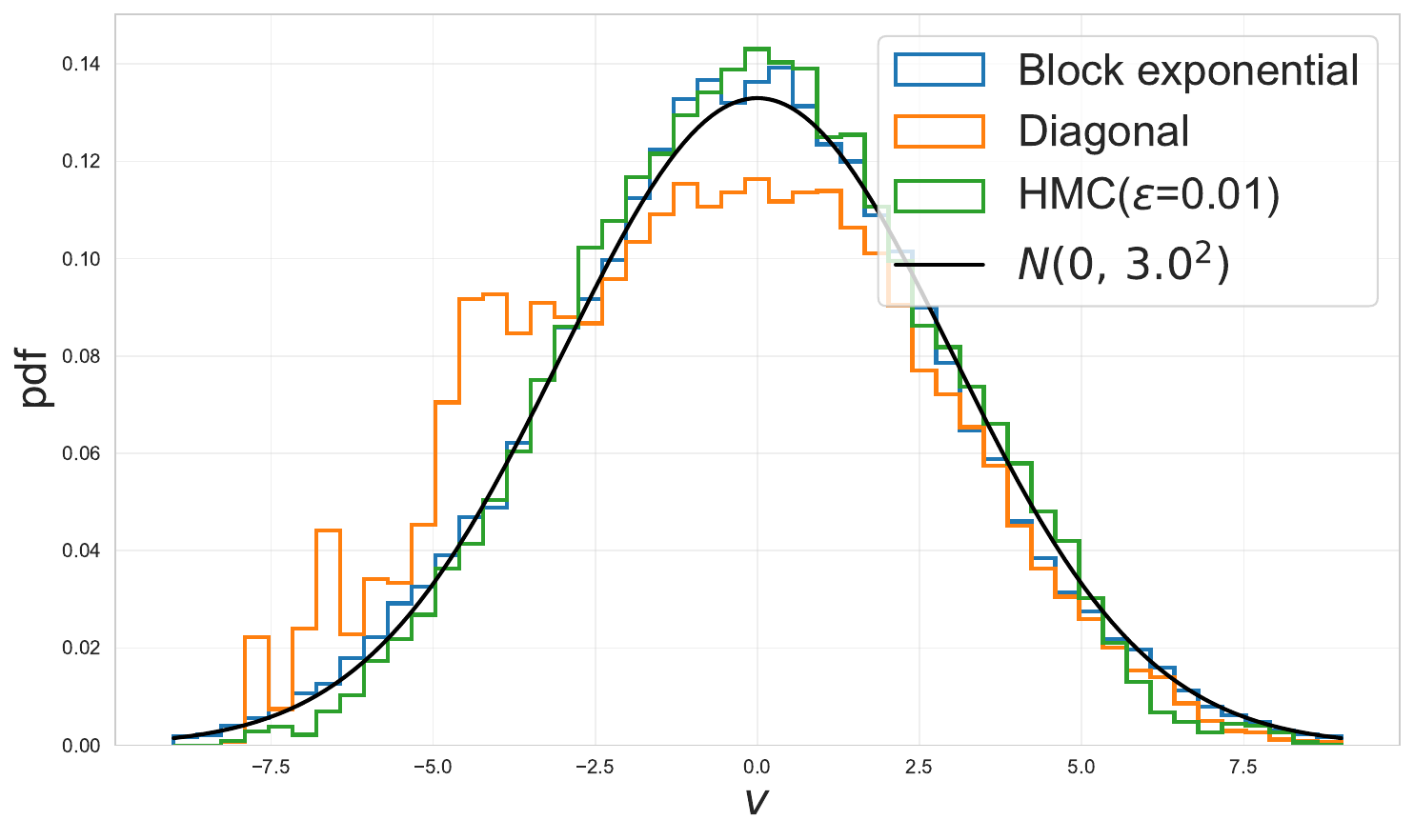}
  \caption{Histograms over the samples for $v$. Here we can see that only Block exponential algorithm actually explores the full distribution correctly.
  The $HMC$ method seems not yet explored fully the tail of distribution.}
  \label{fig:funnel_histogram}
\end{figure}

\begin{figure}
  \centering
  \includegraphics[width=0.7\textwidth]{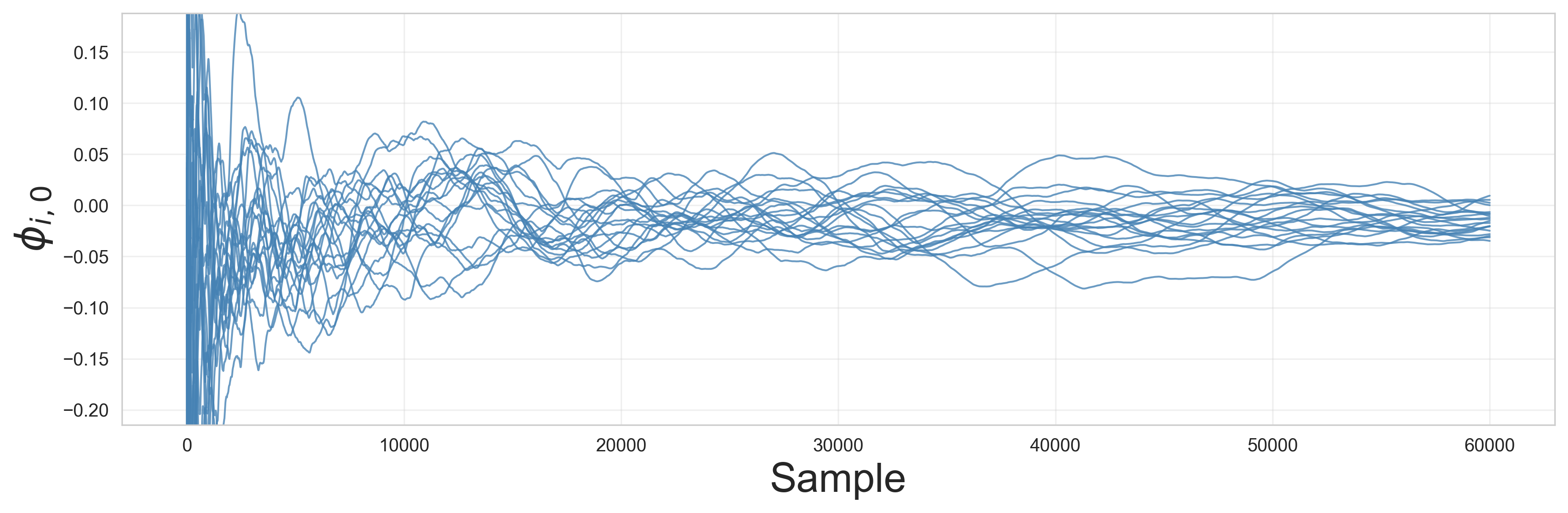}\\
  \includegraphics[width=0.7\textwidth]{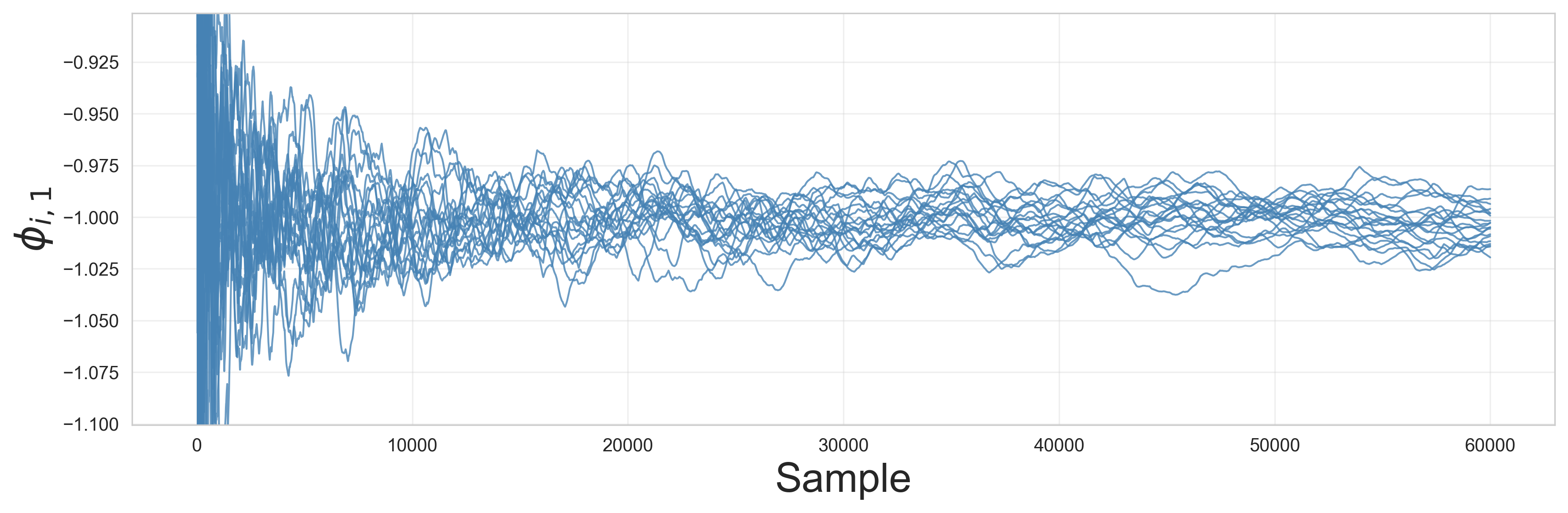}
  \caption{The trajectories of the mass parameters $\phi_{i,0}$ and $\phi_{i,1}$.}
  \label{fig:funnel_parameter}
\end{figure}

\begin{table}[h]
\centering
\caption{Effective Sample Size per 1000 gradient evaluations for different NUTS variants}
\label{tab:ess_per_grad_comparison}
\begin{tabular}{lccc}
\toprule
Method & $\# \nabla$ & $1000\frac{ESS}{\nabla}(v)$ & $\min_i 1000\frac{ESS}{\nabla}(x_i)$ \\
\midrule
Block exponential & 406,721 &  2.89 & 257 \\
Diagonal & 7,794,595 & 0.03 & 4 \\
HMC & 31,099,457 & 0.01 & 0.10 \\
\bottomrule
\end{tabular}
\end{table}

\subsection{Horseshoe Prior}
\begin{figure}
  \centering
  \includegraphics[width=0.49\linewidth]{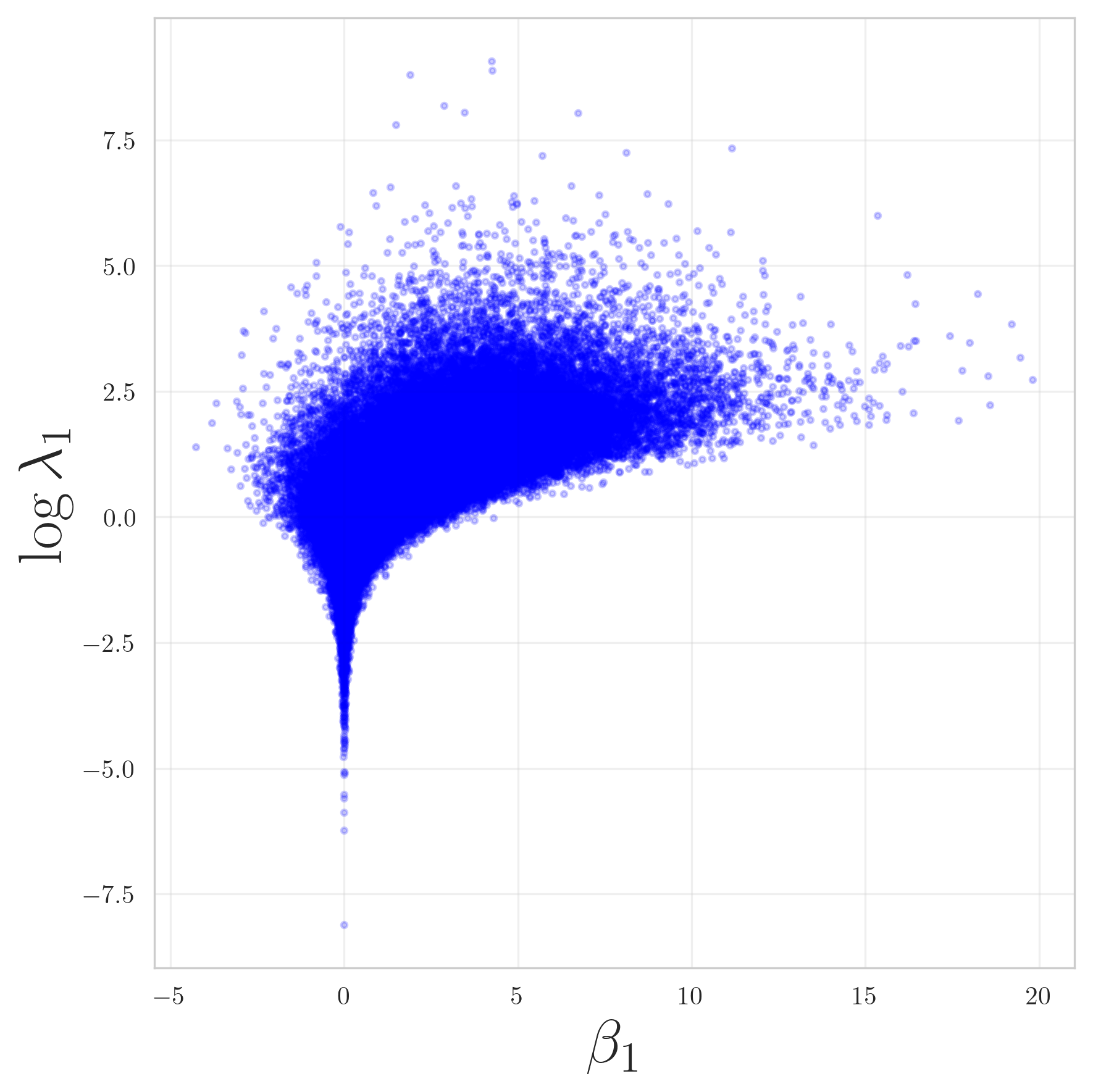}
  \includegraphics[width=0.49\linewidth]{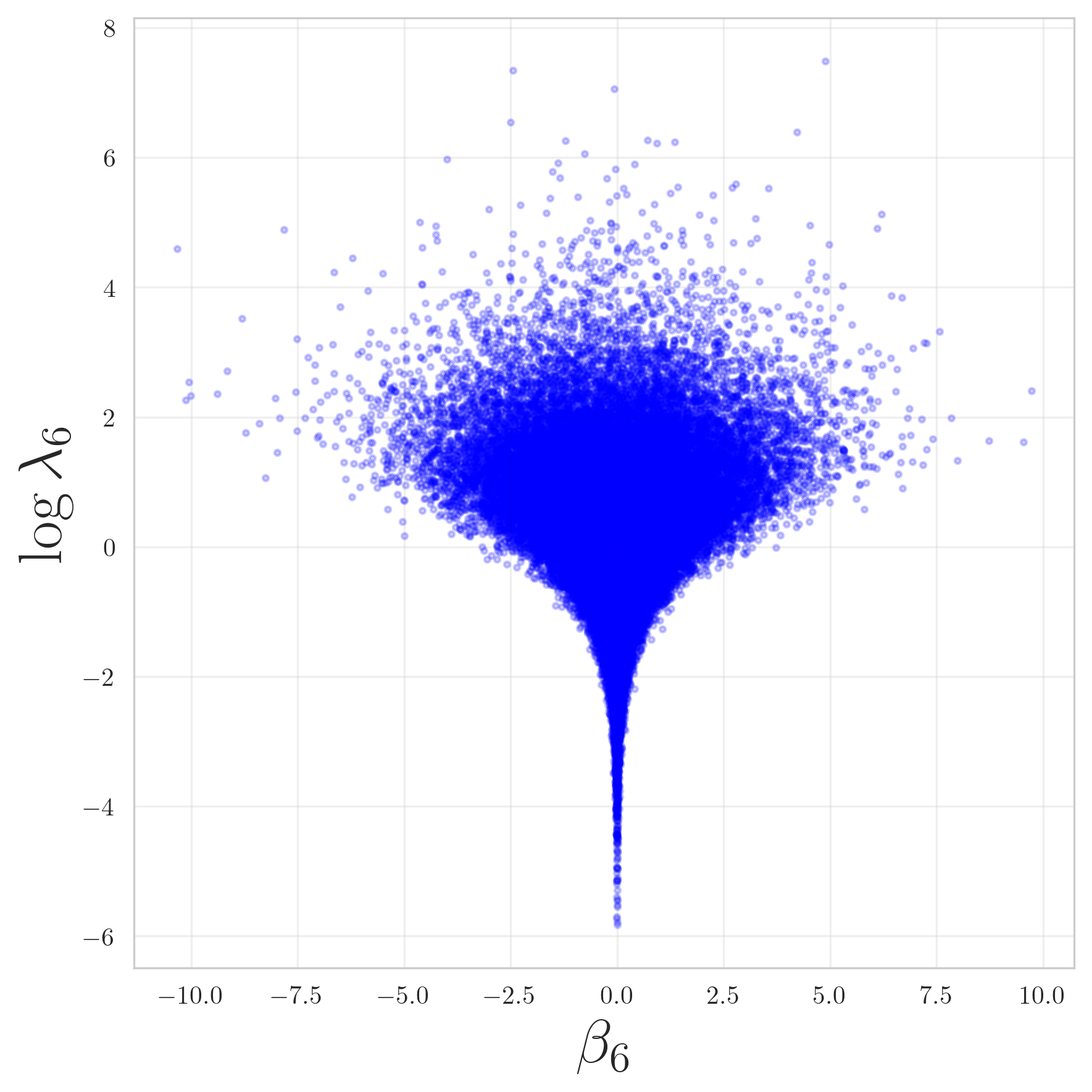}
  \caption{Samples from the horseshoe prior with additive $\beta$'s. 
  The left figure is for $\beta_1$ (true non-zero) and the right figure is for $\beta_6$ (true zero).
  We can see the influence of the likelihood influenced funnel for $\beta_1$ but a regular funnel for $\beta_6$.}
  \label{fig:horseshoe_scatter}
\end{figure}
Our second example is about Bayesian inference with the horseshoe prior, which is a common choice for sparse signal recovery in high-dimensional regression \citep{carvalho2009handling}. In a hierarchical parameterization, each regression coefficient is paired with a local scale parameter, which creates a collection of local funnel geometries. In the logistic regression setting, which we consider, the data $y_1,\ldots,y_n\in\{0,1\}$ are modeled as follows:
\begin{align*}
y_i \mid \eta_i &\sim \text{Bernoulli}(\text{logit}^{-1}(\eta_i)), \\
\eta_i &= \beta_0 + \mathbf{x}_i^T \boldsymbol{\beta}, \qquad i=1,\ldots,n,
\end{align*}
where $x_i\in\mathbb{R}^p$ are covariates, and $\beta_0\in\mathbb{R}$ and $\boldsymbol{\beta}\in\mathbb{R}^p$ are the parameters, which have the following horseshoe prior:
\begin{align*}
\beta_j \mid \lambda_j &\sim \mathcal{N}(0, \tau^2 \lambda_j^2), \\
\lambda_j &\sim \text{Half-Cauchy}(0,1), \qquad j=1,\ldots,p, \\
\beta_0 &\sim \mathcal{N}(0,\sigma_0^2),
\end{align*}
with hyper-parameters $\tau,\sigma_0^2>0$.

Under this prior, the conditional distribution of $\beta_j$ forms a funnel with respect to $\lambda_j$.
The likelihood can strongly reshape this geometry for signals with non-negligible contribution, as illustrated in Figure \ref{fig:horseshoe_scatter}, where the joint distribution of $(\beta_1,\lambda_1)$ is skewed due to the likelihood influence, while $(\beta_6,\lambda_6)$ remains close to the prior shape. This suggests that the sum-of-exponentials form in \eqref{eq:sum_of_exponentials} could be better suited than the single-exponential form in \eqref{eq:exponential_model} for modeling the metric associated with the $\beta$-coordinates. 

For this experiment we set $n=100$ and $p=20$. We generate synthetic predictors $\mathbf{x}_i \sim \mathcal{N}(0,\Sigma)$ with $\Sigma_{ii}=1$ and $\Sigma_{ij}=0.8$ for $i\neq j$. The true coefficient vector has the first five entries equal to one and the remaining entries equal to zero, and we generate data $y_i \sim \text{Bernoulli}(\text{logit}^{-1}(\eta_i))$ with $\eta_i = \mathbf{x}_i^T \boldsymbol{\beta}$.

We compare three methods: {\bf Block sum of exponentials}, the proposed block-adaptive NUTS with adaptive step size using \eqref{eq:sum_of_exponentials}; {\bf Block exponential}, the same method using the single-exponential form \eqref{eq:exponential_model}; and {\bf Diagonal}, NUTS with an adaptive diagonal mass matrix and adaptive step size. In the block methods we use a fixed block structure that groups $\log(\lambda_j)$ and $\beta_0$, while the metric for $\boldsymbol{\beta}$ is parameterized either by the exponential or sum-of-exponentials form:
\begin{align*}
  M^1_i(\boldsymbol{\lambda}) &= \exp\!\left(\phi_{i,1}+ \log(\lambda_i)\phi_{i,2}\right), \\
  M^2_i(\boldsymbol{\lambda}) &= \exp\!\left(\phi^1_{i,1}+ \log(\lambda_i)\phi^1_{i,2}\right) + \exp\!\left(\phi^2_i\right).
\end{align*}
We run each algorithm with adaptive step size for $50,000$ iterations after an initial $10,000$ burn-in steps.
The effect of the metric choice is evident in the trace plots of $\log(\lambda_1)$ in Figure \ref{fig:traceplot_log_lambda_1}. 
The diagonal metric shows a high proportion of divergent transitions (8.9\%),
 with a transition classified as divergent when the kinetic energy changes by more than 1000 units in the trajectory.
  The single-exponential block form still yields a substantial proportion of divergent transitions (2.3\%). 
  By contrast, the sum-of-exponentials form produces only 0.01\% divergent transitions, with these occurring almost exclusively during the initial burn-in and adaptation phase.

This difference can be explained by the fitted functional forms. Empirically, the exponential form typically behaves like
\[
M^1_i(\boldsymbol{\lambda}) \approx \exp\!\left(0.8 - 0.8\,\log(\lambda_i)\right),
\]
whereas the sum-of-exponentials form behaves like
\[
M^2_i(\boldsymbol{\lambda}) \approx \exp\!\left(-1.8\,\log(\lambda_i)\right) + \exp(-0.5),
\]
with considerable variability in the fitted parameters. The trajectories of the parameters are shown in Appendix \ref{app:horseshoe-trace}.
 The additional constant term in $M^2_i$ prevents the metric from becoming excessively large when $\log(\lambda_i)$ is very negative (i.e., when $\lambda_i$ is extremely small), which in turn reduces numerical instability and suppresses divergent transitions.

Finally, Table \ref{tab:horseshoe_ess_per_grad} reports the effective sample size for representative parameters over the 50{,}000 samples. 
The diagonal method is the least efficient, the block-exponential approach is substantially more efficient, and the sum-of-exponentials parameterization provides slightly better overall performance than the block-exponential.

\begin{table}[h]
\centering
\caption{Effective sample size (ESS) per 1000 gradient evaluations for representative horseshoe parameters under three adaptive NUTS metrics. Larger values indicate higher efficiency.}
\label{tab:horseshoe_ess_per_grad}
\begin{tabular}{l r c c c}
\toprule
Method & $\# \nabla$ & $\frac{1000\,ESS}{\nabla}(\beta_0)$ & $\min_i \frac{1000\,ESS}{\nabla}(\beta_i)$ & $\min_i \frac{1000\,ESS}{\nabla}(\log \lambda_i)$ \\
\midrule
Block sum of exponential & 2{,}290{,}530  & 17.31 & 4.09 & 6.56 \\
Block exponential        & 2{,}471{,}758  & 12.07          & 2.97          & 4.59         \\
Diagonal                 & 13{,}328{,}706 & 0.23           & 0.11          & 0.09          \\
\bottomrule
\end{tabular}

\vspace{0.25em}
\end{table}

\begin{figure}
  \centering
   \includegraphics[width=1\linewidth]{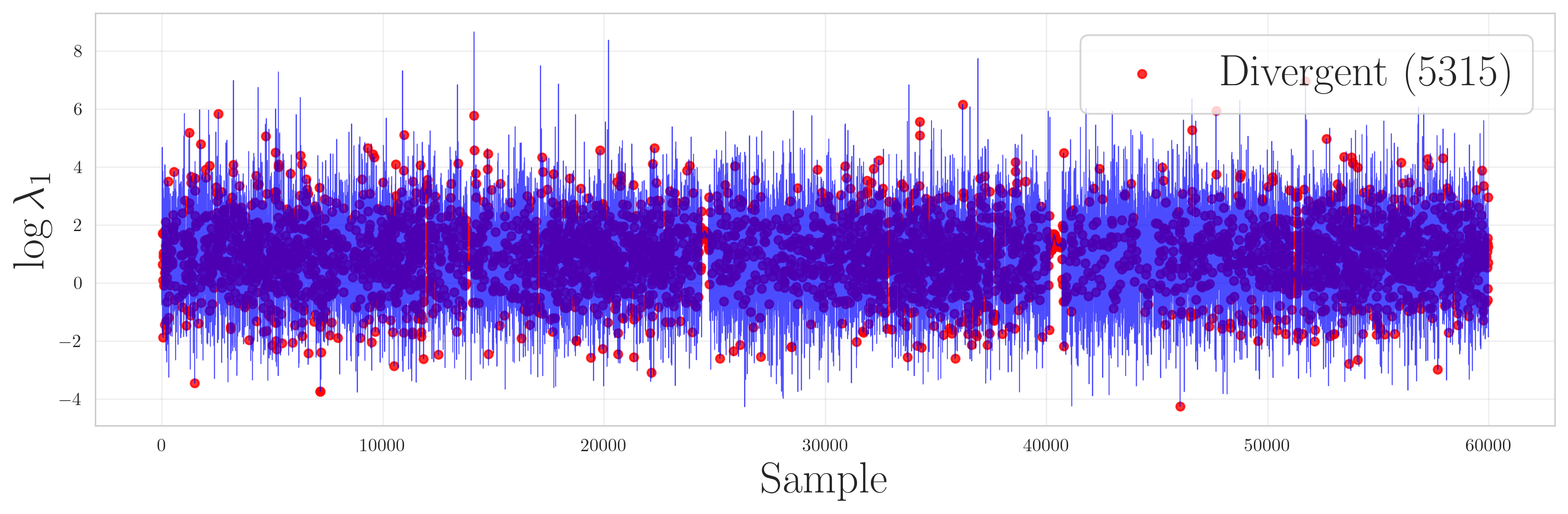}
  \includegraphics[width=1\linewidth]{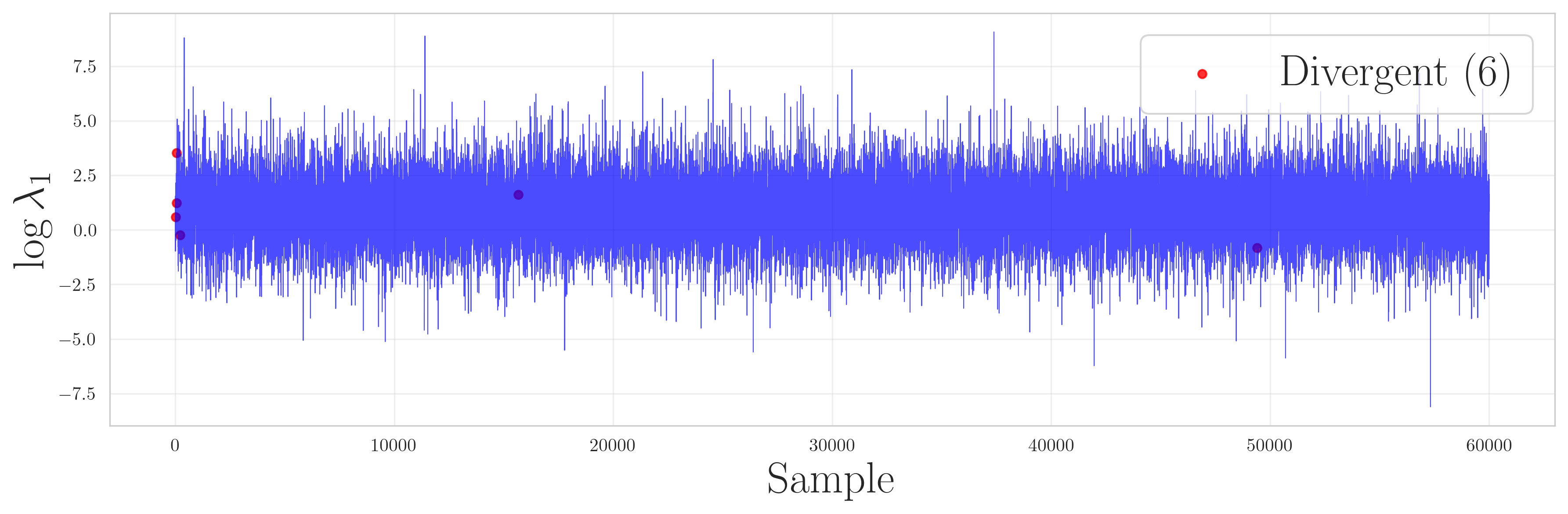}
  \includegraphics[width=1\linewidth]{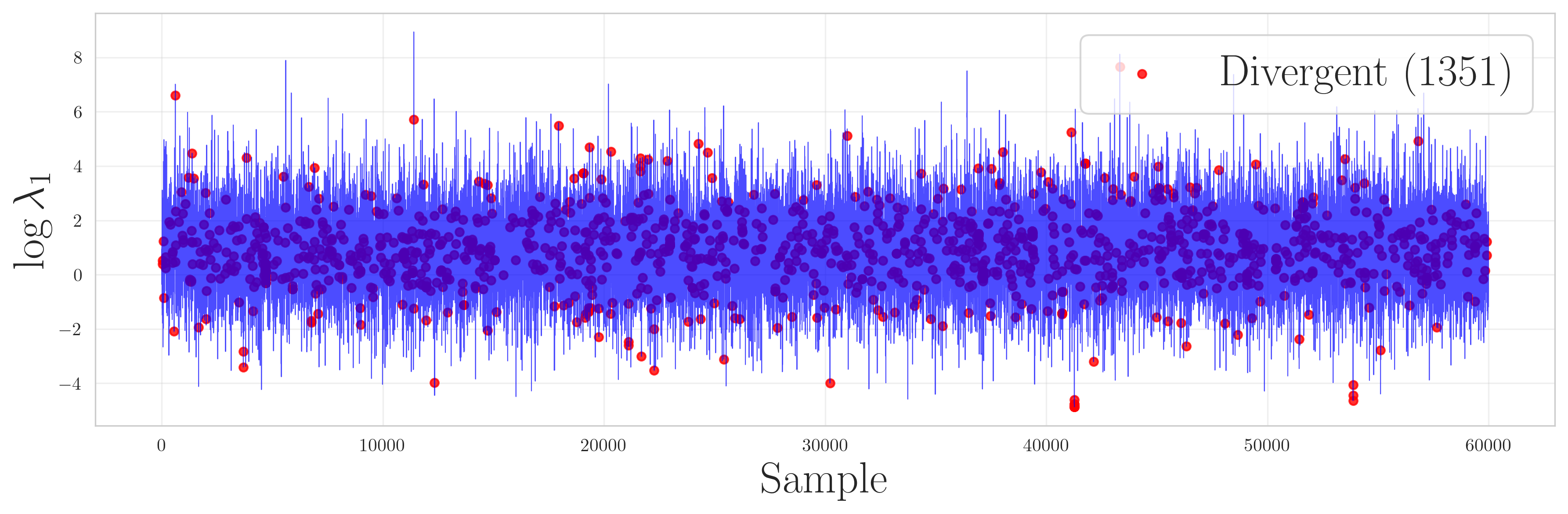}
  \label{fig:traceplot_log_lambda_1}
  \caption{Trace plot for $\log(\lambda_1)$ for the three methods: Diagonal (top),   Block exponential (middle), and Block sum-of-exponentials (bottom).}
\end{figure}

\subsection{Stochastic volatility model}
In this example, we consider the stochastic volatility (SV) model, which is a common nonlinear state-space model used in financial applications \citep[e.g.][]{shephard-selected}, and also often used to benchmark MCMC algorithms.
The data we use are 480 observations of centered monthly log returns for the S\&P 500 between 1978 and 2017.
The SV model we consider has a latent log-volatility process $x_t$ that follows a priori stationary autoregressive dynamics. The values of $x_t$ scale the instantaneous variance of the observed data process $y_t$. The full model is given by:
\begin{align*}
\kappa 
&\sim \log \mathcal{N}(-2, 1),
\\
\phi^\star 
&\sim \mathcal{N}(0, 2),
&
\phi &= \frac{e^{\phi^\star} - 1}{e^{\phi^\star} + 1} \in (-1,1),
\\
\sigma^2 
&\sim \mathrm{InvGamma}(4, 4),
\\
x_0& \mid \phi, \sigma^2 
\sim \mathcal{N}\!\left(0, \frac{\sigma^2}{1 - \phi^2}\right),
\\
x_t& \mid x_{t-1}, \phi, \sigma^2 
\sim \mathcal{N}(\phi x_{t-1}, \sigma^2),
&
t &= 1,\dots,T,
\\
y_t& \mid x_t, \kappa 
\sim \mathcal{N}\!\bigl(0, \kappa^2 e^{x_t}\bigr),
&
t &= 1,\dots,T.
\end{align*}
Here we use the three block structure in a sequential order: block one is $\phi$, block two is $(\kappa,\sigma^2)$, and block three is $\mv{x}=(x_0,\dots,x_T)$.
For block two we use a single-exponential model of the form \eqref{eq:exponential_model}, and for block three we use a sum-of-exponentials model of the form \eqref{eq:sum_of_exponentials}.
We consider this specific model to investigate two questions.
First, \cite{hird-livingstone} reports that using $\mathrm{diag}(\mathrm{Cov}(\pi))^{-1}$ as a mass matrix 
can degrade performance 
relative to using the identity mass. Since we employ a diagonal mass matrix, we want to assess whether a similar effect occurs when the posterior exhibits strong parameter dependence. 
We emphasize that $\mathrm{diag}(\mathrm{Cov}(\pi))^{-1}$ and $\mathrm{diag}(\mathcal{I}_3)$ 
are fundamentally different choices, and it is unclear whether the latter inherits the same theoretical pathology as the former.

Our second objective is to compare different NUTS stopping criteria: as noted in the appendix of \cite{Betancourt2017}, 
the standard NUTS termination rule is not the most natural under non-Euclidean metrics. 
We therefore implement the generalized stopping criterion proposed in \cite{Betancourt2013_rmnuts} (see Appendix \ref{sec:generalized_nouturn} for details) and compare it to the usual criterion. To isolate the effect of the metric and the stopping rule, 
  we consider four methods: Generalized block (exponential block adaptation with the generalized stopping criterion),
   Euclidean block (exponential block adaptation with the standard stopping criterion), 
   Diagonal NUTS (diagonal metric adaptation), and regular NUTS (no metric adaptation).

  We run each algorithm with adaptive step size for $50,000$ iterations after an initial $10,000$ burn-in steps.
 The effective sample size per 1000 gradient evaluations is reported in Table \ref{tab:ess_per_grad_stochvol}. 
  Somewhat surprisingly, the generalized stopping criterion does not improve performance, but rather the Euclidean block variant performs best with this specific model. We do not have a clear explanation why this happens. Nevertheless, both block methods substantially outperform the diagonal and regular variants,
   particularly for $\phi$ and $\sigma^2$, indicating that capturing posterior dependence at the block level can yield 
   large efficiency gains.

\begin{table}[h]
\centering
\caption{Effective sample size (ESS) per gradient evaluations for NUTS variants on the stochastic volatility model. Larger values indicate higher efficiency.}
\label{tab:ess_per_grad_stochvol}
\begin{tabular}{l r c c c c}
\toprule
Method & $\# \nabla$ & $\frac{1000ESS}{\nabla}(\kappa)$ & $\frac{1000ESS}{\nabla}(\phi)$ & $\min_i \frac{1000ESS}{\nabla}(\sigma^2) $ & $\min_i \frac{1000ESS}{\nabla}(x_i) $ \\
\midrule
Euclidean block   & 2{,}615{,}210 & 2.15 & 1.48 & 1.63 & 5.80 \\
Generalized block & 3{,}805{,}018 & 1.75          & 0.86          & 0.86          & 2.50          \\
Diagonal          & 3{,}034{,}262 & 1.43          & 0.36          & 0.29          & 1.85          \\
Regular NUTS      & 5{,}468{,}470 & 1.08          & 0.27          & 0.20          & 1.71          \\
\bottomrule
\end{tabular}

\vspace{0.25em}
\end{table}

\subsection{Negative Binomial}
\label{sec:ex_nb}
Our final example is based on the hierarchical Poisson model of \cite{livingstone2022barker}, except that we replace the Poisson distribution with a negative binomial distribution to allow for overdispersion,
 making the resulting inference problem a bit more challenging. The model is\begin{align*}
  \nu &\sim \operatorname{InvGamma}(1, 0.5),\\
  \mu &\sim \mathcal{N}(0, 10^2),\\
  \eta_i \mid \mu &\sim \mathcal{N}(\mu, 3^2), \qquad i=1,\dots,50,\\
  y_{ij}\mid \eta_i,\nu 
  &\sim \mathrm{NegBin}\!\left(\nu,\; \frac{\nu}{\nu+\exp(\eta_i)}\right),
  \qquad i=1,\dots,50,\; j=1,\dots,5.
\end{align*}
We use the size-probability parameterization of the negative binomial, so that
\(
\mathbb{E}[y_{ij}\mid \eta_i,\nu]=\exp(\eta_i)
\)
and
\(
\operatorname{Var}(y_{ij}\mid \eta_i,\nu)=\exp(\eta_i)+\exp(2\eta_i)/\nu.
\)
Data are generated with true parameters \(\mu=10\) and \(\nu=0.5\).

This configuration produces large mean counts and, when the chain is not initialized near its stationary distribution, leads to large gradients.
Therefore, it provides a challenging test case for the use of initial gradients in parameter mass estimation.
We examine our strategy of centering the gradient observations using an estimated mean \eqref{eq:mean}, and compare it against the baseline without mean correction.

For block structure we have the hyperparameters for \(\mu\) and \(\nu\) as block one, and \(\eta_i\)'s as block two.
As before, we compare three strategies (for the second block): {\bf Block sum of exponentials}, {\bf Block exponential}, and {\bf Diagonal}. Our main focus is on the difference between {\bf Block exponential} with mean correction and the variant that enforces \(\bar g = 0\). We run \(10{,}000\) sampling iterations, as our primary interest is in the burn-in phase.
Figure~\ref{fig:traceplot_mu_negbin} displays trace plots for \(\mu\). 
 The variant without mean adaptation fails to reach the stationary regime within this period. While all the methods using the mean adaptation reaches stationarity.
 We can also see that diagonal method reaches the stationary state faster than the block exponential which in turn is faster than the block sum of exponentials.
 Indicating the increasing level of complexity of the mass matrix slows down the initial convergence. 
 Table~\ref{tab:negbin_ess_per_grad_comparison} reports the effective sample size per \(1000\) 
 gradient evaluations for the working methods; {\bf Block exponential} attains the highest efficiency across all reported parameters.
 The difference is much smaller than in the examples with funnel but still significant.

\begin{figure}
  \centering
  \includegraphics[width=0.49\linewidth]{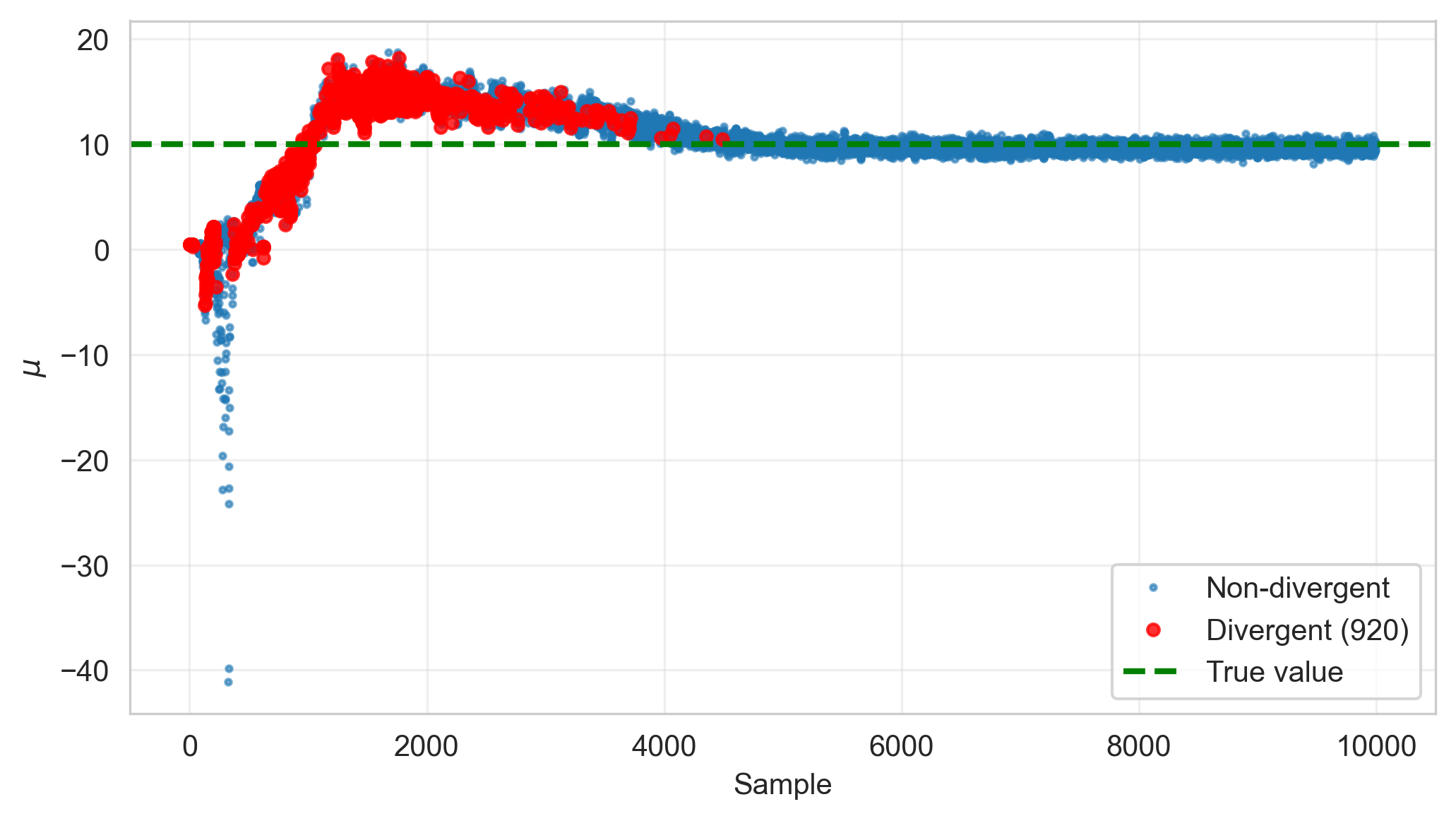}
  \includegraphics[width=0.49\linewidth]{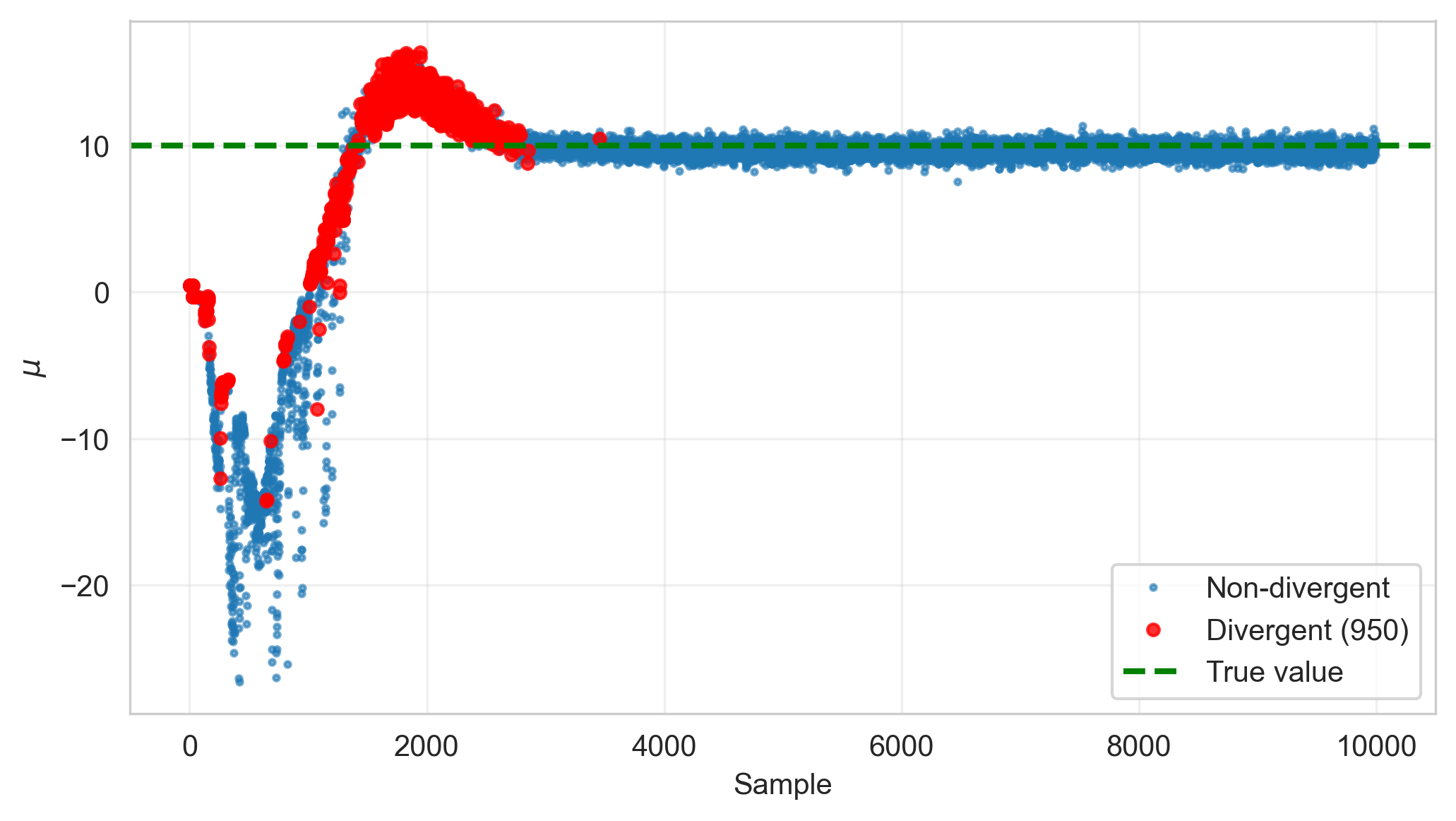}
  \includegraphics[width=0.49\linewidth]{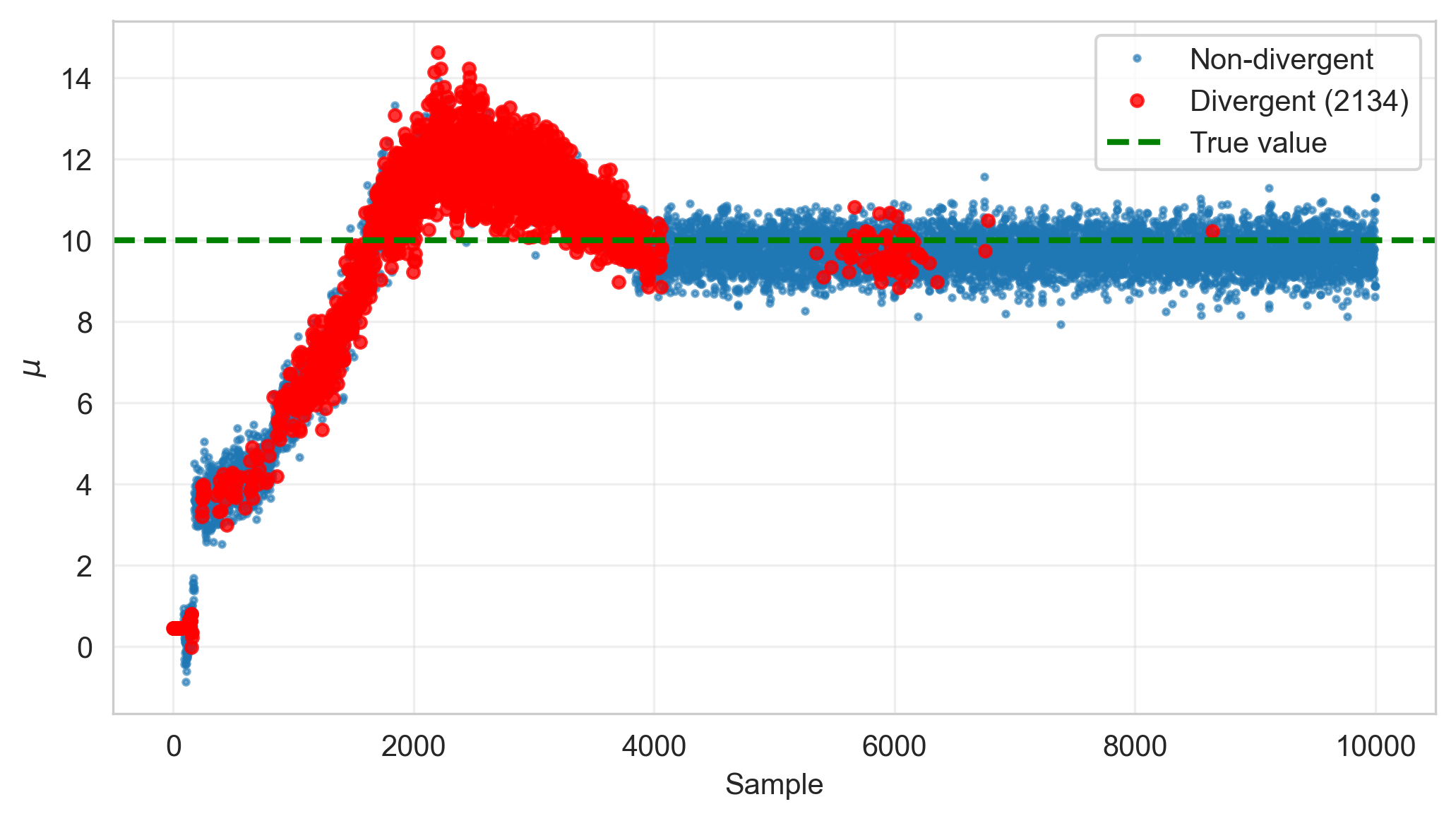}
  \includegraphics[width=0.49\linewidth]{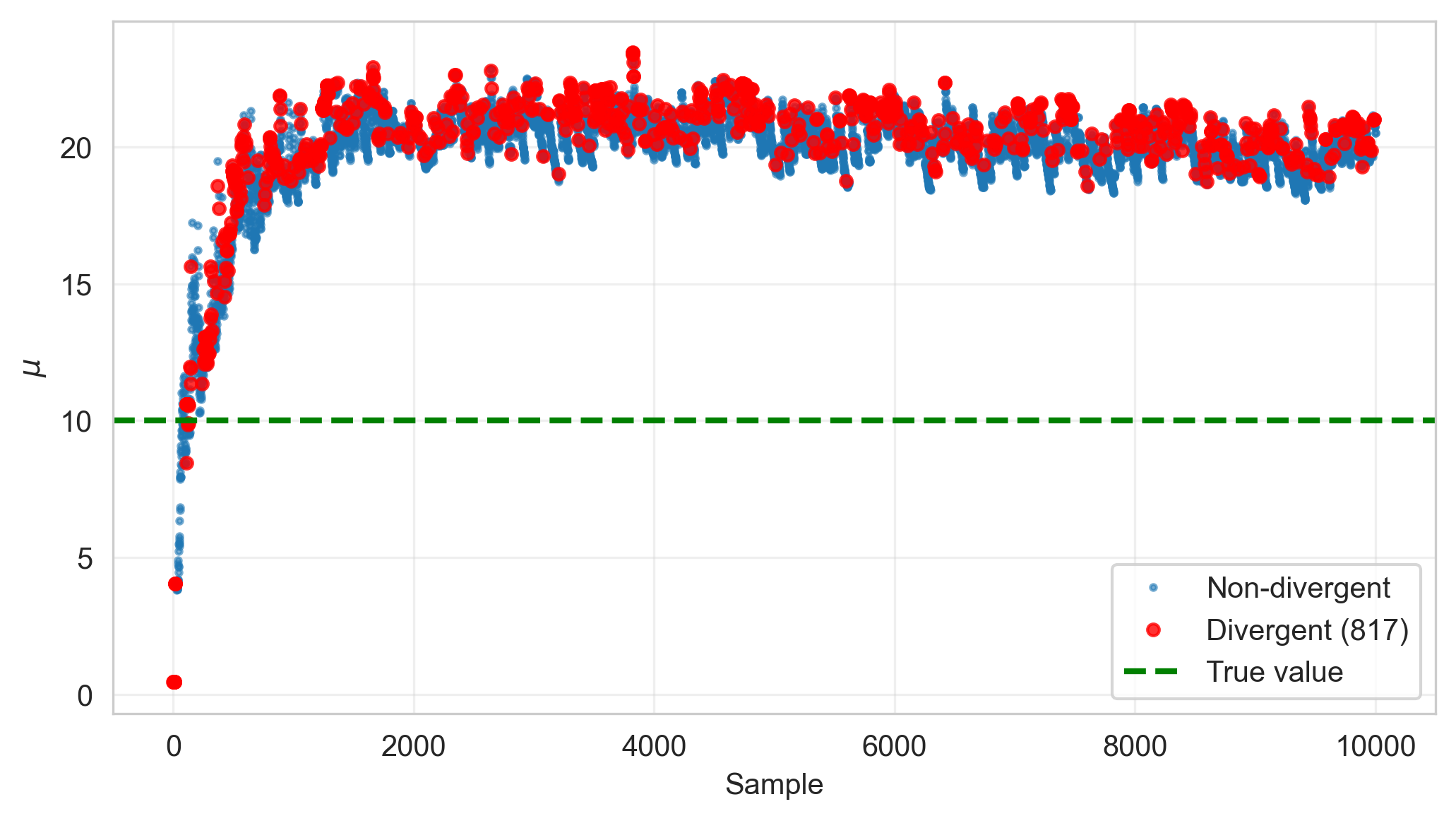}
  \caption{Trace plots for \(\mu\): Block sum of exponentials (top left), Diagonal (bottom left), Block exponential (top right), and Block exponential  with no mean adaptation (bottom right).}
  \label{fig:traceplot_mu_negbin}
\end{figure}

\begin{table}[h]
\centering
\caption{Effective sample size per gradient evaluations for the negative binomial model, computed from the last \(40{,}000\) samples.}
\label{tab:negbin_ess_per_grad_comparison}
\begin{tabular}{lcccc}
\toprule
Method & $\# \nabla$ & $\frac{1000ESS}{\nabla}(\mu)$ & $\frac{1000ESS}{\nabla}(\log \nu)$ & $\min_i \frac{1000ESS}{\nabla}(\eta_{i})$  \\
\midrule
Block sum of exponentials & 826,975 & 27.05 & 25.82 & 53.49  \\
Block exponential & 1,273,347 & 15.27 & 11.70 & 33.26  \\
Diagonal & 3,142,553 & 15.77 & 9.13 & 6.46  \\
\bottomrule
\end{tabular}
\end{table}

\section{Discussion}
\label{sec:discussion}

We present a novel hierarchical RMHMC method which admits an explicit symplectic splitting integrator, and an adaptive algorithm which learns its parameters during estimation. The method is well-suited for many hierarchical models, exploiting their structure to devise a learnable position-dependent mass matrix. 
In order to keep the approach computationally efficient and easy to implement, we restricted our attention to block-diagonal (and, in our experiments, diagonal) mass matrices, which allow for particularly efficient algorithm.
We stress that the target density need not possess a block structure, or indeed any hierarchical structure at all. Rather, the hierarchical decomposition is a modeling choice imposed on the mass matrix, viewed as a surrogate for the local geometry of the target distribution.

The framework nonetheless admits natural extensions to richer matrix families. A simple generalization is to introduce a linear preconditioning transform of the form
\[
\mv{B}\,\mv{M}_{\mv{\phi}}(\mv{\theta}_A)\,\mv{B}^\top,
\]
where \(\mv{M}_{\mv{\phi}}(\mv{\theta}_A)\) is diagonal (and depends on the hierarchical component \(\mv{\theta}_A\)). The matrix \(\mv{B}\) can either be chosen as lower triangular (to allow a Cholesky-like parameterization) or orthonormal (to act as a rotation). This retains computational tractability while allowing for non-diagonal structure in the effective metric. 
Furthermore, one can relate properties of the approximating density to properties of the approximating matrix. 
For instance, a diagonal approximation corresponds to assuming that the coordinates of the gradient $\mv{g}$ are independent.
While one may also assume a low-rank dependence structure among the elements of $\mv{g}$.
Indeed, in a concurrent work \citeauthor{monahan-kristensen-thorson-carpenter} \citep{monahan-kristensen-thorson-carpenter} investigate learning a sparse precision matrix and using it as the mass matrix.
 Similar structures could also be useful in the position-dependent setting. 
 Further, related work by \citeauthor{hird2026highdimensional}
\citep{hird2026highdimensional} using a sparse parametrizations of
dense preconditioners may be promising to explore in conjunction with our approach.

An interesting direction for future work would be to combine this idea with the general block structure described in Appendix~\ref{sec:symmetry_verification}, in which each block metric is allowed to depend on the remaining position blocks.
On the estimation side, we approximate the distribution of gradients by minimizing a Kullback--Leibler divergence between the true conditional gradient distribution and a Gaussian surrogate. 
The specific modeling choice with a Gaussian is convenient, but not necessary. For instance, replacing the normal approximation with a heavy-tailed distribution (such as a multivariate \(t\)) could yield more robust estimators of the mass matrix when gradients exhibit outliers or heavy tails; see \citep{livingstone2019kinetic} for discussion about HMC with non-Gaussian momentums.
Finally, Algorithm \ref{alg:method_overview} uses the gradients calculated at the Markov chain's state $\mv{\theta}_k$. 
It is possible to use all gradients on the integrator's path, weighting them according to their selection probabilities, similar to the `Rao--Blackwellized' adaptation of random-walk MCMC in \cite{andrieu2008tutorial}.

The experiments reported in Section \ref{sec:experiment} demonstrate that the hierarchical RMHMC method can provide substantial speedups, and the adaptive RMHMC works reliably in a range of models, without prior tuning of (hyper)parameters. Finally, the negative binomial example (Section~\ref{sec:ex_nb})
 illustrated that the mean estimation can be crucial for stable estimation. We note that since the stationary mean is known to be zero, the deviations of the estimated mean from zero may also serve as a practical diagnostic of how far the chain is from stationarity.

\section*{Acknowledgements}

M.K. and M.V. were supported by Research Council of Finland grant ‘FiRST (Finnish Centre of Excellence in Randomness and Structures)’ (346311, 364216). The authors wish to acknowledge CSC—IT Center for Science, Finland, for computational resources.

\bibliography{myrefs}
\appendix

\section{Symmetry verification of the hierarchical leapfrog integrator}
\label{sec:symmetry_verification}

\begin{algorithm}
\caption{Exact block flow $e^{h\mathcal H_k}$}
\label{alg:block_flow}
\begin{algorithmic}[1]
\Require Block index $k$, step size $h$, current state $(\mv{\theta},\mv{p})$
\State $\mv{v}_k \gets \mv{M}_k(\mv{\theta}_{-k})^{-1}\mv{p}_k$ \hfill (\ref{eq:v_flow})
\State $\mv{\theta}_k \gets \mv{\theta}_k + h\,\mv{v}_k$
\For{$j=1$ to $K$}
  \If{$j \neq k$}
    \State $\mv{p}_j \gets \mv{p}_j + \frac{h}{2}\,\mv{s}_{j\leftarrow k}(\mv{\theta}_{-k},\mv{v}_k)$ \hfill (\ref{eq:s_flow})
  \EndIf
\EndFor
\State \Return $(\mv{\theta},\mv{p})$
\end{algorithmic}
\end{algorithm}

\begin{algorithm}
\caption{Hierarchical generalized leapfrog step $\Psi_\epsilon^\pi$}
\label{alg:hier_lf_step_pi}
\begin{algorithmic}[1]
\Require Current state $(\mv{\theta}^{(n)},\mv{p}^{(n)})$, step size $\epsilon$, block order $\pi=(\pi_1,\dots,\pi_K)$
\State $(\mv{\theta},\mv{p}) \gets (\mv{\theta}^{(n)},\mv{p}^{(n)})$

\State $\mv{p} \gets \mv{p} - \frac{\epsilon}{2}\,\nabla_{\mv{\theta}} H_0(\mv{\theta})$
\For{$r=1$ to $K-1$}
  \State $(\mv{\theta},\mv{p}) \gets \Call{BlockFlow}{\pi_r,\epsilon/2,\mv{\theta},\mv{p}}$ \hfill (Algorithm~\ref{alg:block_flow})
\EndFor

\State $(\mv{\theta},\mv{p}) \gets \Call{BlockFlow}{\pi_K,\epsilon,\mv{\theta},\mv{p}}$ \hfill (Algorithm~\ref{alg:block_flow})

\For{$r=K-1$ downto $1$}
  \State $(\mv{\theta},\mv{p}) \gets \Call{BlockFlow}{\pi_r,\epsilon/2,\mv{\theta},\mv{p}}$ \hfill (Algorithm~\ref{alg:block_flow})
\EndFor

\State $\mv{p} \gets \mv{p} - \frac{\epsilon}{2}\,\nabla_{\mv{\theta}} H_0(\mv{\theta})$

\State \Return $(\mv{\theta},\mv{p})$
\end{algorithmic}
\end{algorithm}

In this section, we prove Theorem~\ref{thm:multi_symplectic_reversible_hier_leapfrog}, showing that Algorithm~\ref{alg:hier_lf_step_pi} defines a symmetric (time-reversible) and symplectic map. As Algorithm~\ref{alg:hier_lf_step} is a special case of Algorithm~\ref{alg:hier_lf_step_pi}, Theorem~\ref{thm:symplectic_reversible_hier_leapfrog} follows immediately.

The key idea is to view the explicit update as a \emph{splitting integrator}. In the present multi-block setting, we decompose the Hamiltonian as
\[
H = H_0 + \sum_{k=1}^K H_k,
\]
where each partial Hamiltonian has an exact Hamiltonian flow that can be written in closed form. As in the two-block case, it is convenient to describe these flows using operators that update either $\mv{\theta}$ or $\mv{p}$. A single integrator step is then obtained by composing the corresponding exact subflows in a palindromic order.
Assume the parameter and momentum vectors are partitioned as $\mv{\theta} = (\mv{\theta}_1,\dots,\mv{\theta}_K)$ and $\mv{p} = (\mv{p}_1,\dots,\mv{p}_K)$, and that the mass matrix is block diagonal with block $k$ depending only on the remaining position blocks,
\[
  \mv{M}(\mv{\theta})
  =
  \operatorname{blkdiag}\!\big(
    \mv{M}_1(\mv{\theta}_{-1}),\dots,\mv{M}_K(\mv{\theta}_{-K})
  \big),
\]
where $\mv{\theta}_{-k}$ denotes all blocks except $\mv{\theta}_k$. We split the Hamiltonian as
$
  H(\mv{\theta},\mv{p}) = H_0(\mv{\theta}) + \sum_{k=1}^K H_k(\mv{\theta},\mv{p}),
$
with $H_0(\mv{\theta}) = U(\mv{\theta}) + \frac12 \log\det \mv{M}(\mv{\theta})$ and
$H_k(\mv{\theta},\mv{p}) = \frac12 \mv{p}_k^\top \mv{M}_k(\mv{\theta}_{-k})^{-1}\mv{p}_k$.

Let $e^{\epsilon\mathcal{H}_\bullet}$ denote the time--$\epsilon$ Hamiltonian flow of $H_\bullet$. The exact $H_0$-subflow is the momentum kick
\[
  e^{\epsilon\mathcal{H}_0}:\; (\mv{\theta},\mv{p})
  \mapsto
  \bigl(\mv{\theta},\;\mv{p} - \epsilon\,\nabla_{\mv{\theta}}H_0(\mv{\theta})\bigr).
\]
To build $ e^{\epsilon\mathcal{H}_k}$-subflow, $k \in \{1,\dots,K\}$ is a bit more involved.
Define 
\begin{align}
  \label{eq:v_flow}
  \mv{v}_k
  &= \mv{M}_k(\mv{\theta}_{-k})^{-1}\mv{p}_k, \\
  \label{eq:s_flow}
  \big[\mv{s}_{j\leftarrow k}(\mv{\theta}_{-k},\mv{v}_k)\big]_i
  &=
  \mv{v}_k^\top
  \big(\partial_{\theta_{j,i}}\mv{M}_k(\mv{\theta}_{-k})\big)\,\mv{v}_k .
\end{align}
We then introduce the drift operator $D_k^\epsilon$, which moves only $\mv{\theta}_k$, and the kick operator $K_k^\epsilon$, which moves only the momentum blocks $\mv{p}_j$ with $j \neq k$:
\begin{align*}
  D_k^\epsilon:\; (\mv{\theta},\mv{p})
  &\mapsto
  \bigl(\mv{\theta}_1,\dots,\mv{\theta}_k + \epsilon\,\mv{v}_k,\dots,\mv{\theta}_K,\;\mv{p}\bigr), \\[2mm]
  K_k^\epsilon:\; (\mv{\theta},\mv{p})
  &\mapsto
  \bigl(\mv{\theta},\;\mv{p}_1,\dots,\mv{p}_j + \tfrac{\epsilon}{2}\,\mv{s}_{j\leftarrow k}(\mv{\theta}_{-k},\mv{v}_k),\cdots, \mv{p}_K\bigr),
\end{align*}
where $\mv{p}_k' = \mv{p}_k$ and, for $j \neq k$,
and $
  \mv{p}_j' = \mv{p}_j + \tfrac{\epsilon}{2}\,\mv{s}_{j\leftarrow k}(\mv{\theta}_{-k},\mv{v}_k).
$

These formulas follow from Hamilton’s equations under $H_k$: $\dot{\mv{\theta}}_k = \mv{v}_k$, $\dot{\mv{\theta}}_j = \mv{0}$ for $j \neq k$, $\dot{\mv{p}}_k = \mv{0}$, and $\dot{\mv{p}}_j = \tfrac{1}{2}\mv{s}_{j\leftarrow k}$ for $j \neq k$. Since $\mv{M}_k$ depends only on $\mv{\theta}_{-k}$ and these blocks remain fixed during the $H_k$-flow, both $\mv{v}_k$ and $\mv{s}_{j\leftarrow k}$ are constant along the flow. Hence
\[
  e^{\epsilon\mathcal{H}_k} = K_k^\epsilon \circ D_k^\epsilon = D_k^\epsilon \circ K_k^\epsilon.
\]

Let $\pi=(\pi_1,\dots,\pi_K)$ be a permutation of $\{1,\dots,K\}$ specifying the order in which the block flows are applied.
\begin{theorem}
\label{thm:multi_symplectic_reversible_hier_leapfrog}
For any permutation $\pi = (\pi_1,\dots,\pi_K)$, Algorithm~\ref{alg:hier_lf_step_pi}
defines the one-step map
\begin{align*}
  \Psi_\epsilon^\pi
  &=
  e^{\frac{\epsilon}{2}\mathcal H_0}
  \circ
  e^{\frac{\epsilon}{2}\mathcal H_{\pi_1}}
  \circ \cdots \circ
  e^{\frac{\epsilon}{2}\mathcal H_{\pi_{K-1}}}
  \circ
  e^{\epsilon\mathcal H_{\pi_K}}
  \circ
  e^{\frac{\epsilon}{2}\mathcal H_{\pi_{K-1}}}
  \circ \cdots \circ
  e^{\frac{\epsilon}{2}\mathcal H_{\pi_1}}
  \circ
  e^{\frac{\epsilon}{2}\mathcal H_0}.
\end{align*}
This map is symmetric (time-reversible), symplectic, and therefore volume-preserving.
\end{theorem}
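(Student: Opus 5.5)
The plan is to argue entirely at the level of the composition of exact subflows; no explicit Jacobian computations are needed. There are three steps.

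\emph{Step 1: Algorithm~\ref{alg:hier_lf_step_pi} realises the displayed composition.}
The first and last lines of the algorithm are exactly $e^{\frac{\epsilon}{2}\mathcal H_0}$, since the $H_0$-flow is the momentum kick $\mv{p}\mapsto\mv{p}-\tfrac{\epsilon}{2}\nabla_{\mv{\theta}}H_0(\mv{\theta})$. Each call $\textsc{BlockFlow}(k,h,\cdot)$ in Algorithm~\ref{alg:block_flow} computes $K_k^h\circ D_k^h$. One point needs checking here: the kick evaluates $\mv{s}_{j\leftarrow k}(\mv{\theta}_{-k},\mv{v}_k)$ after $\mv{\theta}_k$ has moved. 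This is harmless, because $\mv{s}_{j\leftarrow k}$ and $\mv{v}_k$ depend only on $\mv{\theta}_{-k}$ and $\mv{p}_k$, and $D_k^h$ leaves both unchanged.

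It remains to confirm that $K_k^h\circ D_k^h=e^{h\mathcal H_k}$. Under $H_k$, Hamilton's equations give $\dot{\mv{\theta}}_{-k}=0$ and $\dot{\mv{p}}_k=-\nabla_{\mv{\theta}_k}H_k=0$, because $\mv{M}_k$ does not depend on $\mv{\theta}_k$. Hence the right-hand sides $\dot{\mv{\theta}}_k=\mv{v}_k$ and $\dot{\mv{p}}_j=-\partial_{\mv{\theta}_j}H_k$ are constant along the flow. Computing $-\partial_{\theta_{j,i}}\tfrac12\mv{p}_k^\top\mv{M}_k^{-1}\mv{p}_k$ gives $\tfrac12\mv{v}_k^\top(\partial_{\theta_{j,i}}\mv{M}_k)\mv{v}_k$, so integrating linearly in time yields exactly the stated formulas. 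For $K=2$ with $\mv{M}_A$ constant, $H_A$ produces no kicks, and ordering the blocks as $\pi=(B,A)$ (or $(A,B)$) reproduces Algorithm~\ref{alg:hier_lf_step} once the kick terms are regrouped into $\Xi_A,\Xi_B$. This last reconciliation, including where the $\tfrac12\log|\mv{M}|$ gradient sits inside $H_0$, is the bookkeeping step I expect to be most fiddly.

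\emph{Step 2: symplecticity and volume preservation.}
Each factor $e^{h\mathcal H_\bullet}$ is the exact time-$h$ flow of a smooth Hamiltonian, so it is a symplectic diffeomorphism. Compositions of symplectic maps are symplectic, so $\Psi_\epsilon^\pi$ is symplectic. Symplectic maps preserve the Liouville volume form $\omega^d$, so $\Psi_\epsilon^\pi$ has unit Jacobian determinant and is volume-preserving.

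\emph{Step 3: symmetry.}
Symmetry means $(\Psi_\epsilon^\pi)^{-1}=\Psi_{-\epsilon}^\pi$. Exact flows satisfy $(e^{h\mathcal H_\bullet})^{-1}=e^{-h\mathcal H_\bullet}$. Inverting a composition reverses its order, and the displayed word is a palindrome, so
\[
(\Psi_\epsilon^\pi)^{-1}=e^{-\frac{\epsilon}{2}\mathcal H_0}\circ\cdots\circ e^{-\epsilon\mathcal H_{\pi_K}}\circ\cdots\circ e^{-\frac{\epsilon}{2}\mathcal H_0}=\Psi_{-\epsilon}^\pi .
\]

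The time-reversibility actually used by RMHMC is $R\circ\Psi_\epsilon^\pi\circ R=(\Psi_\epsilon^\pi)^{-1}$, where $R(\mv{\theta},\mv{p})=(\mv{\theta},-\mv{p})$. Each $H_\bullet$ is even in $\mv{p}$, so $R\circ e^{h\mathcal H_\bullet}\circ R=e^{-h\mathcal H_\bullet}$. This can also be checked directly: $D_k$ is odd in $\mv{p}_k$, and $\mv{s}_{j\leftarrow k}$ is quadratic in $\mv{v}_k$. Applying this identity factor by factor, and combining with the palindrome identity above, gives the reversibility statement.
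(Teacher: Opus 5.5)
Your proposal is correct and follows the paper's own argument: a palindromic composition of exact Hamiltonian subflows satisfies $(\Psi_\epsilon^\pi)^{-1}=\Psi_{-\epsilon}^\pi$, and a composition of symplectic flows is symplectic and hence volume-preserving. Your explicit check that \textsc{BlockFlow} is the exact $H_k$-flow, and your $R$-reversibility identity, usefully spell out what the paper leaves to its preceding text and to Hairer--Lubich--Wanner.

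One side remark is wrong, though it does not affect the theorem: only $\pi=(B,A)$ reproduces Algorithm~\ref{alg:hier_lf_step}. The ordering $\pi=(A,B)$ instead splits the $\mv{\theta}_A$-drift into two halves around a single full $H_B$-flow. It therefore advances $\mv{\theta}_B$ with $\mv{M}_B$ evaluated at the half-step value of $\mv{\theta}_A$, rather than with $\tfrac12\bigl(\mv{M}_B(\mv{\theta}_A^{(n)})^{-1}+\mv{M}_B(\mv{\theta}_A^{(n+1)})^{-1}\bigr)$. This gives a different integrator, still symmetric and symplectic.
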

\begin{proof}
By construction, $\Psi_\epsilon^\pi$ is a palindromic composition of exact Hamiltonian flows. Standard results for symmetric splitting integrators imply that
\[
(\Psi_\epsilon^\pi)^{-1}=\Psi_{-\epsilon}^\pi,
\]
so the method is symmetric. Moreover, each subflow $e^{\epsilon\mathcal{H}_\bullet}$ is symplectic for every $\epsilon>0$, and the composition of symplectic maps is symplectic. Hence $\Psi_\epsilon^\pi$ is symplectic, and therefore volume-preserving; see \citep[Sec.~V.3, pp.~149--154 and Sec.~VI.2, p.~182]{HairerLubichWanner2006GNI}.
\end{proof}

\section{Horseshoe trace plot}
\label{app:horseshoe-trace}

Here we show the trajectories of the mass parameters for the horseshoe prior example.
\begin{figure}
  \centering
  \includegraphics[width=0.32\linewidth]{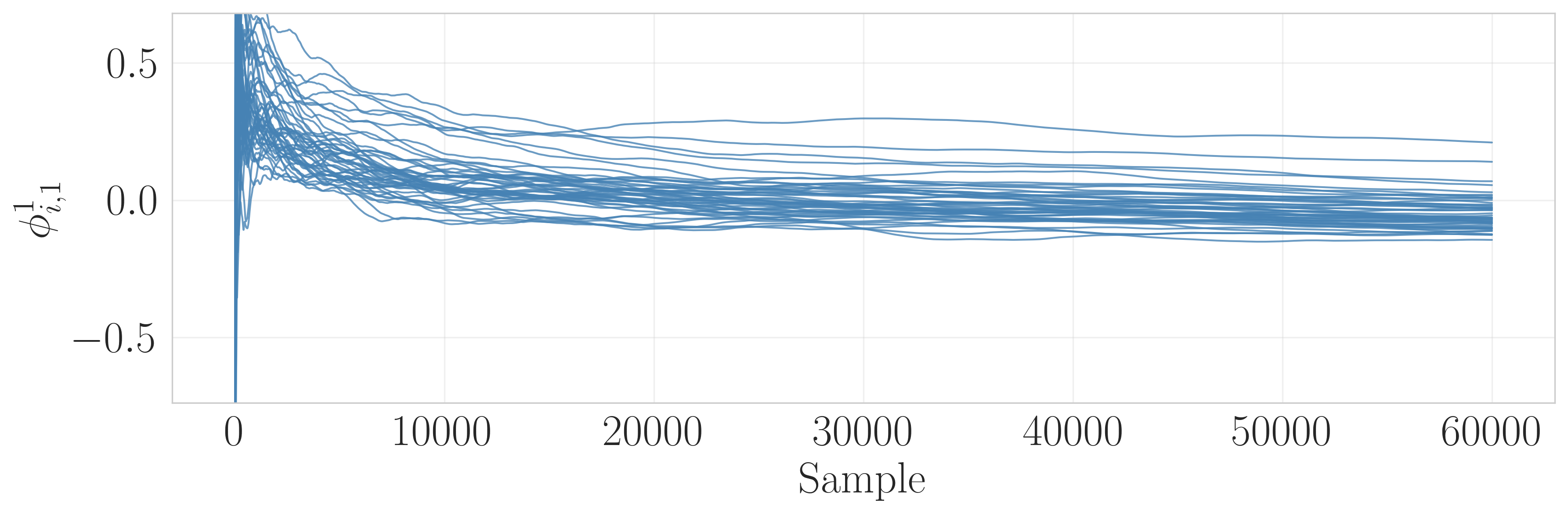}
  \includegraphics[width=0.32\linewidth]{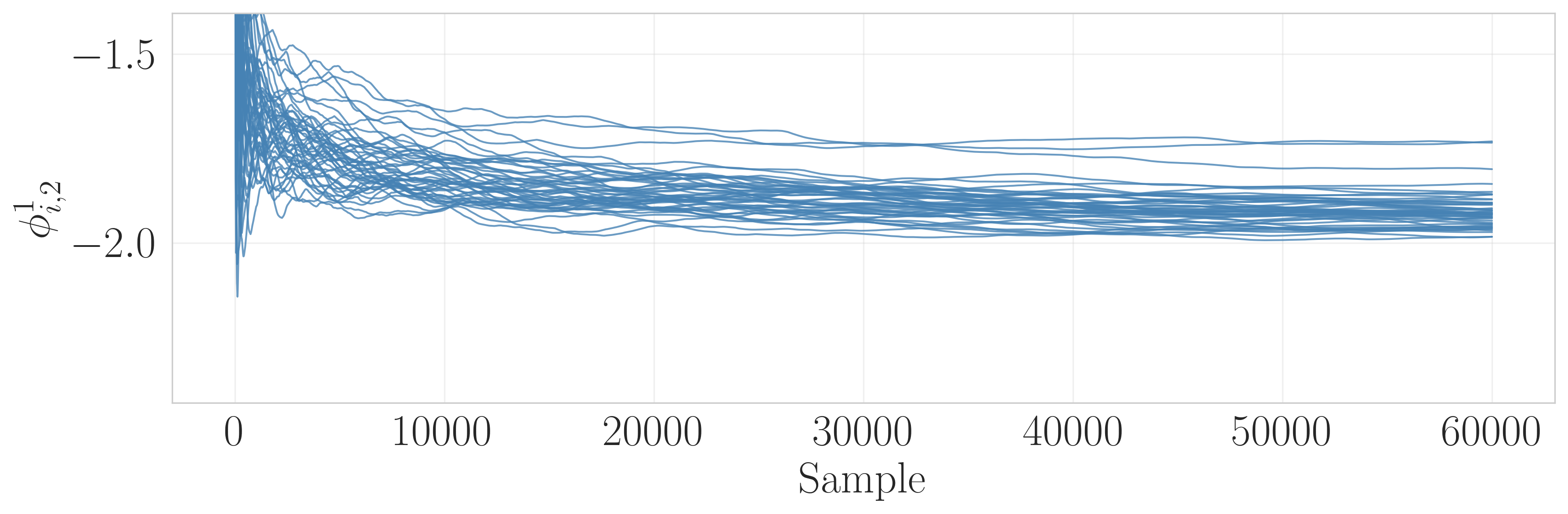}
  \includegraphics[width=0.32\linewidth]{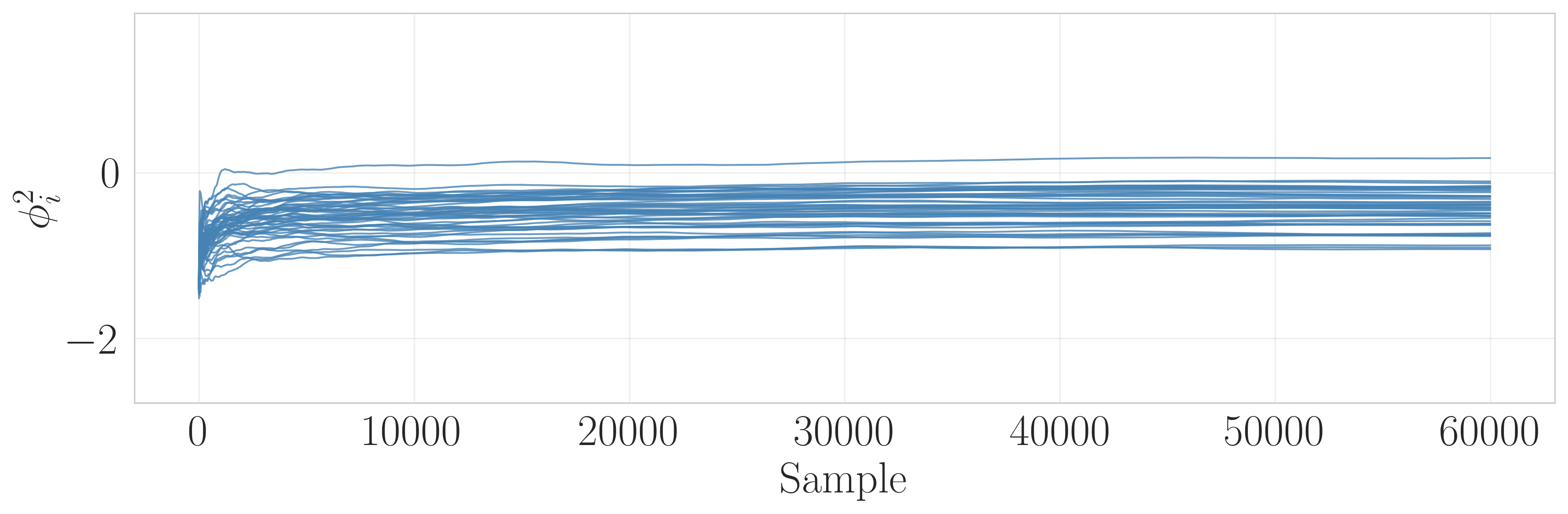} 
  \label{fig:trajectoriesbeta_block_exp}
  \caption{Parameter trajectories for the Block sum-of-exponential-form coefficients}
\end{figure}

\begin{figure}
  \centering
   \includegraphics[width=0.32\linewidth]{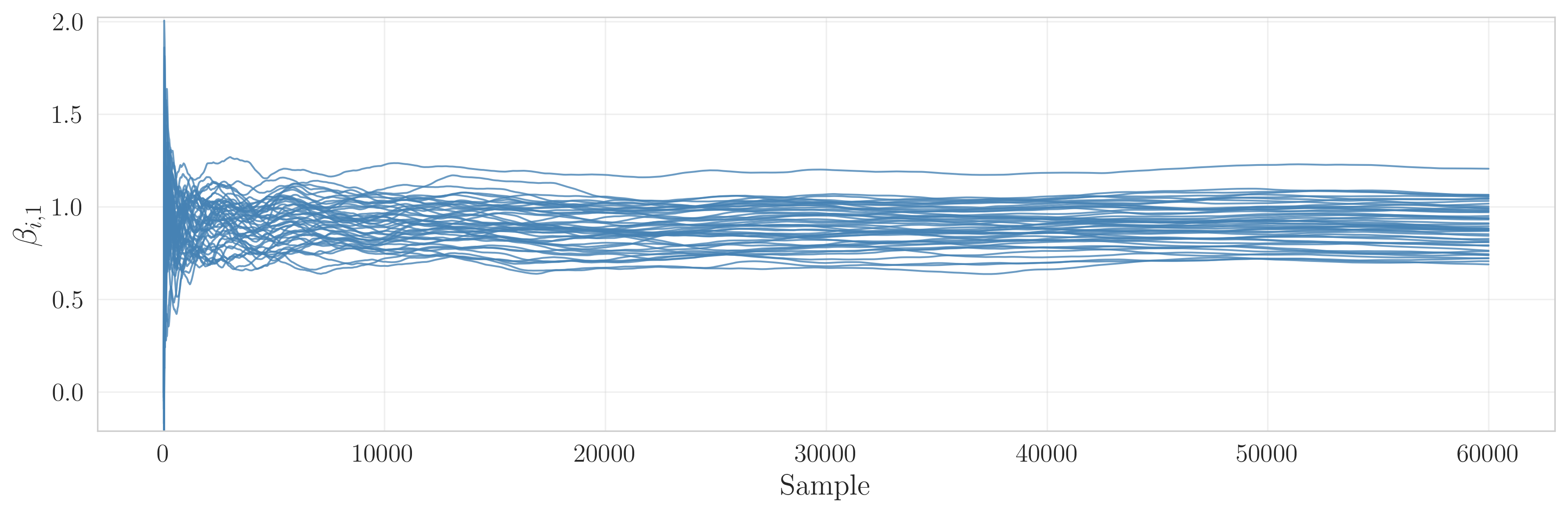} 
  \includegraphics[width=0.32\linewidth]{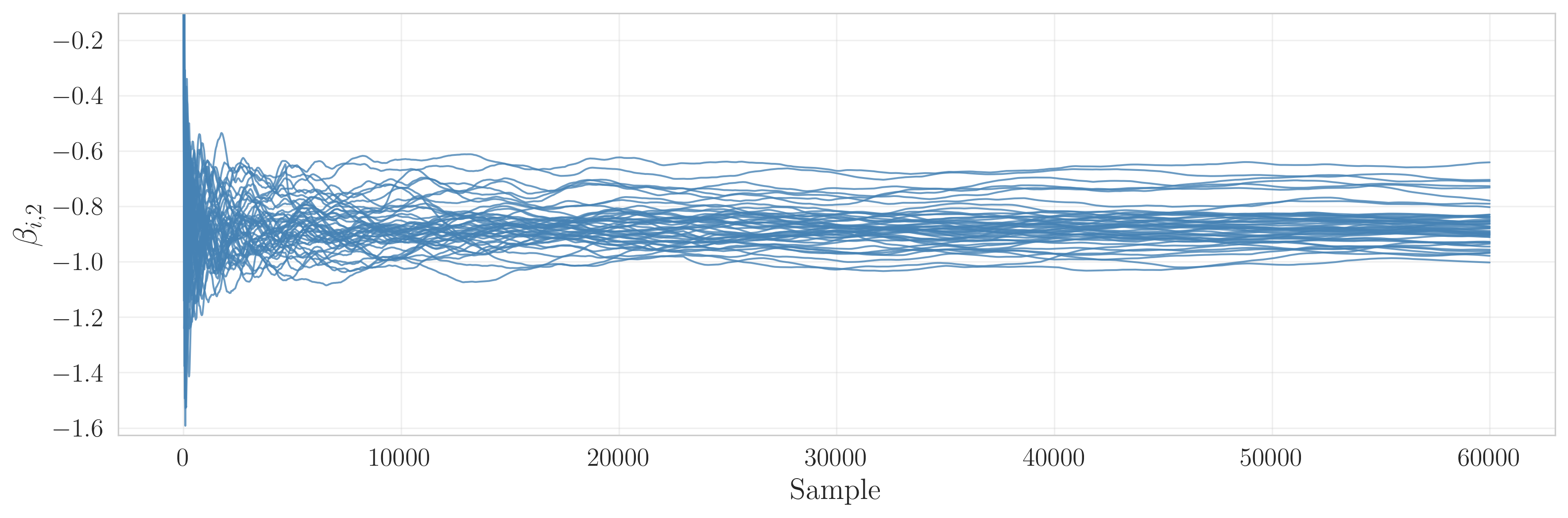} 
 \label{fig:trajectoriesbeta_block_non_additive}
  \caption{Parameter trajectories for the Block exponential coefficients}
\end{figure}

\section{Details of conditional information and Kullback-Leibler minimization}
\label{app:conditional-kl}

Write $\pi = \pi_A\otimes \pi_{B\mid A}$, that is, in terms of its marginal and conditional densities
$$
  \pi(\mv{\theta}) = \pi_A(\mv{\theta}_A) \pi_{B\mid A}(\mv{\theta}_B\mid \mv{\theta}_A).
$$
Note that we may write (the conditional score):
\begin{align*}
\mv{g}_B(\mv{\theta}_B\mid \mv{\theta}_A)
&= \nabla_{\mv{\theta}_B} \log \pi(\mv{\theta}) \\
& = \nabla_{\mv{\theta}_B} \big( \log \pi_A(\mv{\theta})  + \log \pi_{B\mid A}(\mv{\theta}_B\mid \mv{\theta}_A) \big) \\
& = \nabla_{\mv{\theta}_B} \log \pi_{B\mid A}(\mv{\theta}_B\mid \mv{\theta}_A).
\end{align*}

For now, assume that for each fixed $\mv{\theta}_A$, the random vector $\mv{g}_B = \mv{g}_B(\mv{\theta}_B\mid \mv{\theta}_A)$ where $\mv{\theta}_B\sim \pi_{B\mid A}(\uarg\mid \mv{\theta_A})$ has a continuous distribution. This is true under general conditions: it is sufficient that the function $\mv{g}_B(\uarg\mid \mv{\theta}_A)$ is continuously differentiable and that $\det(\nabla_{\mv{\theta}_B} \mv{g}_B(\mv{\theta}_B\mid \mv{\theta}_A)) \neq 0$ almost everywhere. However, it will become clear later, that our \emph{method} is applicable without such a conditional density.

Let $p_{B\mid A}(\uarg\mid \mv{\theta}_A)$ stand for the density of $\mv{g}_B$, and let $\mathcal{Q}_{B\mid A}$ be a family of conditional probability densities $q_{B\mid A}(\mv{\theta}_B \mid \mv{\theta}_A)$ supported on the whole space. Then, for all $q_{B\mid A}\in\mathcal{Q}_{B\mid A}$, we may write
$$
\mathrm{KL}(\pi_A\otimes p_{B\mid A}\;\|\; \pi_A\otimes  q_{B\mid A})
= \E_{\pi_A\otimes p_{B\mid A}}  [\log p_{B\mid A}(\mv{g}_B\mid \mv{\theta}_A) - \log q_{B\mid A}(\mv{g}_B\mid \mv{\theta}_A)].
$$
The first term is independent of $q_{B\mid A}$, so we deduce that
$$
\arg\min_{q_{B\mid A}\in \mathcal{Q}_{B\mid A}}\mathrm{KL}(\pi_A\otimes p_{B\mid A}   \;\|\; \pi_A\otimes q_{B\mid A})
= \arg\min_{q_{B\mid A}\in \mathcal{Q}_{B\mid A}} \E_{\pi_A\otimes p_{B\mid A}}[ -\log q_{B\mid A}(\mv{g}_B\mid \mv{\theta}_A)].
$$

Note that for each fixed $\mv{\theta}_A$, the random vector $\mv{g}_B \sim p_{B\mid A}(\uarg\mid \mv{\theta}_A)$ satisfies by a standard calculation (using Fisher's identity) that $\E[\mv{g}_B]=\mv{0}$ and
$$
  \E[\mv{g}_B \mv{g}_B^T |\mv{\theta}_A] 
  = - \int \nabla_{\mv{\theta}_B} \nabla_{\mv{\theta}_B}^\top \log \pi_{B\mid A}(\mv{\theta}_B\mid \mv{\theta}_A) \pi(\,d\mv{\theta}_B| \mv{\theta}_A)
  = \mathcal{I}_{B\mid A}(\mv{\theta}_A).
$$
This motivates to consider conditional densities $q_{B\mid A}(\uarg \mid \mv{\theta}_A)$ which are zero-mean Gaussian $\mathcal{N}\big(\mv{0},\mv{M}_B(\mv{\theta}_A;\mv{\phi})\big)$, in which case:
$$
   -\log q_{B\mid A}(\mv{g}_B\mid \mv{\theta}_A)
   = \frac{1}{2} \log | \mv{M}_B(\mv{\theta}_A; \mv{\phi}) | + \mv{g}_B \mv{M}_B^{-1}(\mv{\theta}_A; \mv{\phi}) \mv{g}_B,
$$
which is of the form \eqref{eq:per_iter_loss_general}.

\section{Sampling from the trajectory}
\label{sec:nuts-trajectory}
\newcommand{\logpdf}{\log p}
We now describe how we sample from the constructed trajectory in the NUTS algorithm.
We initialize the an iteration by 
$\mv{\theta} \gets \mv{\theta}^{(t)}$ and simulate $\mv{p} \sim \mathcal{N}\!\big(\mv{0}, \mv{M}(\mv{\theta}^{(t)})\big)$.
And describe how we generate a trajectory $\mathcal{T}$ starting from the initial state $z = (\mv{\theta}, \mv{p})$.
The section is based on \cite{Betancourt2017} APPENDIX A.

\subsection{Detailed Trajectory Construction via Multiplicative Expansion}

The construction of the trajectory follows the dynamic implementation strategies detailed in Appendix A of \cite{Betancourt2017}. Let $z = (\mv{\theta}, \mv{p})$ denote a point in phase space. We define a trajectory $\mathcal{T}$ as an ordered sequence of states generated by the symplectic integrator $\Psi_\epsilon$.

The process begins with an initial trajectory $\mathcal{T}_0 = \{ z_0 \}$ containing only the current state of the Markov chain. To ensure the validity of the MCMC sampler, the probability of selecting a specific trajectory $\mathcal{T}$ given an initial state $z$ must satisfy the reversibility condition $\mathbb{T}(\mathcal{T} | z) = \mathbb{T}(\mathcal{T} | z')$ for any $z, z' \in \mathcal{T}$ \cite[App.~A.2.1]{Betancourt2017}. This is achieved by defining the transition distribution as the uniform distribution over the set $\mathfrak{T}_z^L$ of all valid trajectories of length $L$ containing $z$:
\begin{equation}
    \mathbb{T}(\mathcal{T} | z) = \text{Uniform}(\mathfrak{T}_z^L).
\end{equation}
In the dynamic implementation, this set is explored through recursive doubling.
The algorithm used for building a new trajectory starting from $\mathcal{T}_j$ to $\mathcal{T}_{j+1}$ is shown in 
Algorithm \ref{alg:build_tree}.

\begin{algorithm}
\caption{Multiplicative Trajectory Expansion}
\label{alg:build_tree}
\begin{algorithmic}[1]
\Require Current trajectory $\mathcal{T}_j$, step size $\epsilon$
 \State $\mathcal{T}' \gets \emptyset$
 \State $L_j \gets 2^j$
\State  $v \sim \text{Uniform}(\{+1, -1\})$
\If{$v = +1$} \Comment{Forward Expansion}
    \State Let $z_{curr}$ be the last state in $\mathcal{T}_j$
    \For{$k = 1$ to $L_j$}
        \State $z_{curr} \gets \Psi_\epsilon(z_{curr})$ \Comment{Forward Step}
        \State Append $z_{curr}$ to $\mathcal{T}'$
    \EndFor
    \State $\mathcal{T}_{j+1} \gets \mathcal{T}_j \cup \mathcal{T}'$
\Else \Comment{Backward Expansion}
    \State Let $z_{start} = (\mv{\theta}, \mv{p})$ be the first state in $\mathcal{T}_j$
    \State $z_{curr} \gets (\mv{\theta}, -\mv{p})$ \Comment{Negate momentum for reversibility}
    \For{$k = 1$ to $L_j$}
        \State $z_{curr} \gets \Psi_\epsilon(z_{curr})$ \Comment{Forward Step on negated state}
        \State Prepend $(\mv{\theta}_{curr}, -\mv{p}_{curr})$ to $\mathcal{T}'$ \Comment{Store with restored momentum}
    \EndFor
    \State $\mathcal{T}_{j+1} \gets \mathcal{T}' \cup \mathcal{T}_j$
\EndIf
\State \Return $\mathcal{T}_{j+1}$
\end{algorithmic}
\end{algorithm}

\subsection{Termination and Validity Checks}

The termination logic safeguards the Markov 
chain against numerical instability while strictly enforcing the reversibility conditions required for detailed balance.
The first check is ensuring the stability of the symplectic integrator.
 We define the set of Hamiltonian energy deviations for a trajectory segment $\mathcal{T}$ as $\Delta E(\mathcal{T}) = \{ H(\mv{z}) - H(\mv{z}_{init}) \mid \mv{z} \in \mathcal{T} \}$. A trajectory is deemed \textbf{divergent} if the spread of these deviations, $\delta(\mathcal{T}) = \max(\Delta E) - \min(\Delta E)$, exceeds a critical threshold $\Delta_{max}$. If this divergence occurs within the \emph{newly added} segment $\mathcal{T}'$, the integrator has failed locally; the segment is immediately discarded to prevent invalid transitions. Conversely, if the divergence is only detected when considering the \emph{entire} trajectory $\mathcal{T}_j$ (due to the accumulation of global error), the expansion is halted (\texttt{stop}=\texttt{True}), but the trajectory is retained (\texttt{discard}=\texttt{False}) for diagnostic purposes.

If the integration is numerically stable, the algorithm must verify that the expansion preserves the reversibility of the chain. As described in Appendix A.4.1 of \cite{Betancourt2017}, it is insufficient to merely check if the full path has turned back. We must ensure that the new segment does not contain any internal "U-turns" that would have triggered a stop had the algorithm started from a state within that segment. This is verified by checking the \textbf{Generalized No-U-Turn} condition (Algorithm \ref{alg:check_uturn}) on all binary sub-trees within the new segment. If any sub-tree satisfies the U-turn criterion, the new segment violates the reversible tree construction and must be discarded.

Finally, if the segment is both stable and valid, the algorithm checks for termination. If the \emph{full} trajectory $\mathcal{T}_j$ satisfies the No-U-Turn condition, or if the maximum depth is reached, the expansion stops efficiently, accepting the current trajectory as the final proposal.

\begin{algorithm}
\caption{CheckUTurn}
\label{alg:check_uturn}
\begin{algorithmic}[1]
\Require Trajectory segment $\mathcal{T}$ containing states $(\mv{q}_{start}, \mv{p}_{start}) \dots (\mv{q}_{end}, \mv{p}_{end})$
\Ensure Boolean indicating if the trajectory has turned back
\State $\Delta \mv{q} \gets \mv{q}_{end} - \mv{q}_{start}$
\If{$\Delta \mv{q} \cdot \mv{p}_{end} < 0$ \textbf{or} $\Delta \mv{q} \cdot \mv{p}_{start} < 0$}
    \State \Return \texttt{True}
\EndIf
\State \Return \texttt{False}
\end{algorithmic}
\end{algorithm}

\begin{algorithm}
\caption{Check Stopping Conditions}
\label{alg:check_stopping}
\begin{algorithmic}[1]
\Require Current trajectory $\mathcal{T}_j$ (length $2^j$), new segment $\mathcal{T}' \subset \mathcal{T}_j$ (length $2^{j-1}$), threshold $\Delta_{max}$
\Ensure Flags: \texttt{stop}, \texttt{discard}, \texttt{divergent}

\State \textbf{Initialize:} $\texttt{stop}, \texttt{discard}, \texttt{divergent} \gets \text{False}$

\State Calculate $\delta(\mathcal{T}')$ and $\delta(\mathcal{T}_j)$ \Comment{Compute energy spreads}

\If{$\delta(\mathcal{T}') > \Delta_{max}$ }
    \State \Return (\texttt{True}, \texttt{True}, \texttt{True}) \Comment{Local divergence: Discard}
\EndIf

\For{$k = 1$ to $j-1$} \Comment{Check sub-trees in new segment $\mathcal{T}'$}
    \State Let $\{\mathcal{S}_i\}$ be the set of disjoint sub-trees of length $2^k$ inside $\mathcal{T}'$
    \For{each sub-tree $\mathcal{S} \in \{\mathcal{S}_i\}$}
        \If{\textsc{CheckUTurn}($\mathcal{S}$)}
            \State \Return (\texttt{True}, \texttt{True}, \text{False}) \Comment{Invalid: Internal U-turn}
        \EndIf
    \EndFor
\EndFor
\If{$\delta(\mathcal{T}_j) > \Delta_{max}$}
    \State \Return (\texttt{True}, \text{False}, \texttt{True}) \Comment{Global divergence: Stop but retain}
\EndIf
\If{\textsc{CheckUTurn}($\mathcal{T}_j$)} \Comment{Check Global U-Turn}
    \State \texttt{stop} $\gets$ \text{True} 
\EndIf

\If{$j \ge \text{max\_depth}$}
    \State \texttt{stop} $\gets$ \text{True}
\EndIf

\State \Return (\texttt{stop}, \texttt{discard}, \texttt{divergent})
\end{algorithmic}
\end{algorithm}
\subsection{Generalized No-U-Turn Criterion}
\label{sec:generalized_nouturn}

We now present a No-U-Turn criterion that is compatible with a
position-dependent metric $\mv{M}(\mv{q})$ introduced in
\citep{Betancourt2013_rmnuts}.

In the Euclidean case with constant mass matrix $\mv{M}$, the Hamiltonian
flow satisfies $\dot{\mv{q}}(t)=\mv{M}^{-1}\mv{p}(t)$.
If $\mv{M}=\mv{I}$, then
\[
\mv{q}_{end}-\mv{q}_{start}
\;=\;
\int \mv{p}(t)\,dt,
\]
and the standard No-U-Turn condition checks whether the displacement
between the endpoints has become negatively aligned with either endpoint
momentum.

For a general position-dependent metric $\mv{M}(\mv{q})$, Hamiltonian flow gives
\[
\dot{\mv{q}}(t)
\;=\;
\mv{M}(\mv{q}(t))^{-1}\mv{p}(t),
\]
and hence the natural generalized displacement along a segment
$\mathcal{T}$  is
\begin{equation}
\label{eq:rho_segment}
\mv{\rho}(\mathcal{T})
\;:=\;
\int_{\mathcal{T}}
\mv{M}(\mv{q}(t))^{-1}\mv{p}(t)\,dt,
\end{equation}
which in practice is approximated by a discrete sum over the integrator
states in $\mathcal{T}$. This formulation corresponds to the Riemannian generalization of NUTS
proposed by \citet{Betancourt2013_rmnuts}.
In the Euclidean case with $\mv{M}=\mv{I}$ this reduces to
$\mv{\rho}(\mathcal{T})=\mv{q}_{end}-\mv{q}_{start}$.

A trajectory segment is said to have made a U-turn if the generalized
displacement becomes negatively aligned with either endpoint momentum, i.e.
\[
\mv{\rho}(\mathcal{T})^\top \mv{p}_{end} < 0
\quad\text{or}\quad
\mv{\rho}(\mathcal{T})^\top \mv{p}_{start} < 0.
\]
This formulation is consistent with both forward and backward integration,
as reversing the direction of the segment changes the sign of
$\mv{\rho}(\mathcal{T})$.
This can be incorporated into Algorithm~\ref{alg:check_stopping} by replacing each call to \textsc{CheckUTurn} with \textsc{CheckUTurn (Generalized)} in Algorithm~\ref{alg:check_uturn_generalized}.
\begin{algorithm}
\caption{CheckUTurn (Generalized)}
\label{alg:check_uturn_generalized}
\begin{algorithmic}[1]
\Require Trajectory segment $\mathcal{T}$ containing states
$(\mv{q}_{start}, \mv{p}_{start}) \dots (\mv{q}_{end}, \mv{p}_{end})$
\Ensure Boolean indicating if the trajectory has turned back
\State $
\mv{\rho} \gets
\sum_{z\in \mathcal{T}}
\mv{M}(\mv{q})^{-1}\mv{p}\,\Delta t
$
\Comment{Discrete approximation of \eqref{eq:rho_segment}}
\If{$\mv{\rho}^\top \mv{p}_{end} < 0$
    \textbf{or}
    $\mv{\rho}^\top \mv{p}_{start} < 0$}
    \State \Return \texttt{True}
\EndIf
\State \Return \texttt{False}
\end{algorithmic}
\end{algorithm}

\subsection{Retrospective Biased Sampling: Offsets and a Basic Invariance Result}
Let the final (tree-indexed) trajectory be the ordered sequence
\[
\mathcal{T}=\bigl(z_0,z_1,\dots,z_{2^D-1}\bigr),\qquad z_t=(\mv{\theta}_t,\mv{p}_t),
\]
constructed by multiplicative doubling to depth \(D\), and let $H(z_t)= H(\mv{\theta}_t,\mv{p}_t)$  Following
the biased progressive sampling construction of \cite[App.~A.3.2]{Betancourt2017},
we will now build, recursively the sequence of nonnegative sampling weights
\[
w=\bigl(w_0,\dots,w_{2^D-1}\bigr).
\]
The validity (detailed
balance) of the resulting selection rule is inherited from
\cite[App.~A.3.2]{Betancourt2017}.

Initialize the offset and base weights by
\[
o_D := 0,
\qquad
w^{(0)}_t \propto \exp \left(-H(z_t) \right),\quad t=0,1,\dots,2^D-1.
\]
For each level \(\ell=D,D-1,\dots,1\) recurse as follows.  Given the
doubling direction \(v_\ell\in\{+1,-1\}\), update the
offset by $
o_{\ell-1} \;:=\; o_\ell + 2^{\ell-1}\,\mathbf{1}\{v_\ell=-1\}.$
Define the unique size-\(2^\ell\) block containing the initial state \(z_0\) by
$B_\ell \;:=\; \{o_\ell,o_\ell+1,\dots,o_\ell+2^\ell-1\},$ 
which are the indices of the trajectory $\mathcal{T}_l$ in $\mathcal{T}_D$. We then have the associated left/right halves
\[
B_\ell^{\mathrm{L}}:=\{o_\ell,\dots,o_\ell+2^{\ell-1}-1\},\qquad
B_\ell^{\mathrm{R}}:=\{o_\ell+2^{\ell-1},\dots,o_\ell+2^\ell-1\}.
\]
The old half is where the trajectory was build from  (the one containing \(z_0\)) is determined by \(v_\ell\):
\[
(B_{\ell-1},B_\ell^{\mathrm{new}})=
\begin{cases}
(B_\ell^{\mathrm{L}},B_\ell^{\mathrm{R}}), & v_\ell=+1,\\
(B_\ell^{\mathrm{R}},B_\ell^{\mathrm{L}}), & v_\ell=-1,
\end{cases}
\]
note that $B_\ell^{\mathrm{new}}$ is indices of $\mathcal{T}'= \mathcal{T}_{\ell} \setminus \mathcal{T}_{\ell-1}$ for  in $\mathcal{T}_D$.
 With the current weights \(w^{(D-\ell)}\) define the component masses
\[
W_\ell^{\mathrm{old}} := \sum_{t\in B_{\ell-1}} w_t^{(D-\ell)},
\qquad
W_\ell^{\mathrm{new}} := \sum_{t\in B_\ell^{\mathrm{new}}} w_t^{(D-\ell)}.
\]
Set the (component-level) acceptance factor
$
\alpha_\ell \;:=\; \min\!\left(1,\frac{W_\ell^{\mathrm{new}}}{W_\ell^{\mathrm{old}}}\right),
$
and update only the weights inside \(B_\ell\) by
\[
w_t^{(D-\ell+1)} :=
\begin{cases}
c_\ell\,(1-\alpha_\ell)\, w_t^{(D-\ell)}, & t\in B_{\ell-1},\\[4pt]
c_\ell\,\alpha_\ell\, w_t^{(D-\ell)}, & t\in B_\ell^{\mathrm{new}},\\[4pt]
w_t^{(D-\ell)}, & t\notin B_\ell,
\end{cases}
\qquad
c_\ell := \frac{\sum_{t\in B_\ell} w_t^{(D-\ell)}}
{(1-\alpha_\ell)W_\ell^{\mathrm{old}}+\alpha_\ell W_\ell^{\mathrm{new}}}.
\]
After completing the recursion down to \(\ell=1\), define \(w:=w^{(D)}\) and
sample the final state by drawing an index
\[
I \sim \text{Categorical}\!\left(\frac{w_0}{\sum_s w_s},\dots,
\frac{w_{2^D-1}}{\sum_s w_s}\right),
\qquad \text{and returning } z_I\in\mathcal{T}.
\]
The algorithms for the sampling is described in Algorithms \ref{alg:replay_step} and \ref{alg:replay_full}.
\begin{algorithm}
\caption{One retrospective replay update at level $\ell$ (biased merge)}
\label{alg:replay_step}
\begin{algorithmic}[1]
\Require Level $\ell\in\{1,\dots,D\}$, offset $o_\ell$, block $B_\ell$, current weights $w$ (i.e.\ $w^{(D-\ell)}$), directions $v_1,\dots,v_D$
\Ensure Updated tuple $(\ell-1,\,B_{\ell-1},\,o_{\ell-1},\,w,\,v)$ where $w$ is now $w^{(D-\ell+1)}$

\State $B_\ell^{\mathrm{L}} \gets \{o_\ell,\dots,o_\ell+2^{\ell-1}-1\}$
\State $B_\ell^{\mathrm{R}} \gets \{o_\ell+2^{\ell-1},\dots,o_\ell+2^\ell-1\}$

\If{$v_\ell = +1$}
    \State $B_{\ell-1} \gets B_\ell^{\mathrm{L}}$ 
    \State $B_\ell^{\mathrm{new}} \gets B_\ell^{\mathrm{R}}$
\Else
    \State $B_{\ell-1} \gets B_\ell^{\mathrm{R}}$ 
    \State $B_\ell^{\mathrm{new}} \gets B_\ell^{\mathrm{L}}$
\EndIf

\State $W_\ell^{\mathrm{old}} \gets \sum_{t\in B_{\ell-1}} w_t$
\State $W_\ell^{\mathrm{new}} \gets \sum_{t\in B_\ell^{\mathrm{new}}} w_t$
\State $\alpha_\ell \gets \min\!\left(1,\;W_\ell^{\mathrm{new}}/W_\ell^{\mathrm{old}}\right)$
\Comment{if $W_\ell^{\mathrm{old}}=0$, set $\alpha_\ell\gets 1$}

\State $S_\ell \gets \sum_{t\in B_\ell} w_t$
\State $c_\ell \gets
\displaystyle
\frac{S_\ell}{(1-\alpha_\ell)W_\ell^{\mathrm{old}}+\alpha_\ell W_\ell^{\mathrm{new}}}$

\For{$t\in B_{\ell-1}$}
    \State $w_t \gets c_\ell(1-\alpha_\ell)\,w_t$
\EndFor
\For{$t\in B_\ell^{\mathrm{new}}$}
    \State $w_t \gets c_\ell\alpha_\ell\,w_t$
\EndFor

\State $o_{\ell-1} \gets o_\ell + 2^{\ell-1}\mathbf{1}\{v_\ell=-1\}$

\State \Return $(\ell-1,\,B_{\ell-1},\,o_{\ell-1},\,w,\,v)$
\end{algorithmic}
\end{algorithm}

\begin{algorithm}
\caption{Recursive construction of $w=w^{(D)}$ and sampling from $\mathcal{T}$}
\label{alg:replay_full}
\begin{algorithmic}[1]
\Require Trajectory $\mathcal{T}=(z_0,\dots,z_{2^D-1})$, energies $H(z_t)$, directions $v_1,\dots,v_D$
\Ensure Index $I$ and selected state $z_I\in\mathcal{T}$

\State $o_D \gets 0$
\State $B_D \gets \{0,1,\dots,2^D-1\}$
\State $w_t \gets \exp\!\big(-H(z_t)\big)$ for $t=0,\dots,2^D-1$ \Comment{$w^{(0)}$ up to a constant}
\State $\ell \gets D$
\While{$\ell>0$}
    \State $(\ell,\,B_\ell,\,o_\ell,\,w,\,v) \gets \textsc{ReplayStep}(\ell,\,B_\ell,\,o_\ell,\,w,\,v)$
\EndWhile
\State $w \gets w / \sum_s w_s$
\State $I \sim \mathrm{Categorical}(w_0,\dots,w_{2^D-1})$
\State \Return $I,\;z_I$
\end{algorithmic}
\end{algorithm}

\subsection{Step-size adaptation using the last expanded subtree}
\label{app:stepsize-adaptation}

We now want to use the trajectory to adapt the step size $\epsilon$ of the symplectic integrator.
At the final doubling step of the trajectory  we have the new states in
\[
\mathcal{T}' \;:=\; \mathcal{T}_D \setminus \mathcal{T}_{D-1},
\qquad |\mathcal{T}'| = 2^{D-1}.
\]
Define the average acceptance probability over $\mathcal{T}'$ by
\begin{equation*}
\widehat H = 
\frac{1}{|\mathcal{T}'|}
\sum_{z\in\mathcal{T}'}
\exp\!\Big(\min\big(-H(z)+H(z_0),\,0\big)\Big).
\end{equation*}

The statistic $\widehat H$ is used to adapt the step size $\epsilon$ via Polyak-averaged \citep{polyak-juditsky} accelerated Robbins--Monro approximation \citep{kesten1958accelerated}.
Let $\alpha_0 \in (0, 1)$ denote the target average acceptance probability, $n$ the iteration number and $x = \log \epsilon$ the step size on logarithmic scale.
Let $n_0 \in \mathbb{N}$ and $\kappa \in (1/2, 1]$ be tuning parameters (we use $n_0 = 5$ and $\kappa = 0.75$ throughout).
Initializing with $N_{\mathrm{eff}, 0} = 0$ and $\hat x_0 = x_0$, the updates are as follows.

\begin{align*}
    \eta_k^{(0)} & = (n_0+k)^{-\kappa}
    \\
    \eta_k^{(1)} & = (n_0+N_{\mathrm{eff}, k})^{-\kappa}
    \\
    \hat x_{k+1} & = \hat x_k - \eta_k^{(1)} (\alpha_0 - \widehat H_k)
    \\
    x_{k+1} & = (1 - \eta_k^{(0)}) x_k + \eta_k^{(0)} \hat x_{k+1}
    \\
    \sigma_{k+1} & = \operatorname{sgn}(\alpha_0 - \widehat H_k)
    \\
    N_{\mathrm{eff}, k+1} & = N_{\mathrm{eff}, k} + \mathbf{1}[\sigma_{k+1} \sigma_k < 0]
\end{align*}

We remark that it is common to use Nesterov Dual Averaging (NDA) to adapt the step size of NUTS, as in the original paper \cite{HoffmanGelman2014}.
However, we have observed that NDA seems to exhibit strong, slow oscillations of the step size around its targeted value.
Since we do not turn off the adaptation at any point we found this undesirable.
In our experiments, the Polyak-averaged accelerated Robbins--Monro approximation algorithm described above adapts practically as quickly as NDA but results in convergence to rather than oscillations around the target value (results not shown).


\section{Gradient clipping}
\label{sec:clipping-details}
Here we let $\mv{g}$ the gradient of the log-target.
\[
\mv{g}^{\mathrm{clip}}=
\begin{cases}
  \mv{g} & \text{if }\lVert\mv{g}\rVert\le C,\\[4pt]
  C\,\dfrac{\mv{g}}{\lVert\mv{g}\rVert} & \text{otherwise},
\end{cases}
\]
with some threshold $C$. Because the threshold is problem specific, we set the $C$ to the approximate target quantile $\delta\in[0,1]$; we set this to $0.9$ in our experiments. The quantiles are themselves learned by a stochastic approximation rule \citep[cf.][]{duflo,fort-moulines-schreck-vihola}:
\[
C \leftarrow C \exp\!\bigl[-\eta(\mathrm{clip}-\delta)\bigr],
\qquad
\mathrm{clip}=\mathbbm{1}\!\bigl(\lVert\mv{g}\rVert>C\bigr),
\]
so $C$ grows when clipping is too common and shrinks otherwise.

\end{document}